\documentclass[12pt]{article}

\usepackage[margin=1in]{geometry}

\usepackage{setspace}

\usepackage{graphicx}

\usepackage{enumitem}

\usepackage{microtype}
\usepackage{subfigure}
\usepackage{booktabs}
\usepackage{comment}
\usepackage{wrapfig}
\usepackage{arydshln}
\usepackage{enumitem}
\usepackage{multirow}
\usepackage{url}
\usepackage{natbib}
\usepackage{authblk}

\RequirePackage[colorlinks,citecolor=blue,linkcolor=blue,urlcolor=blue,pagebackref]{hyperref}

\usepackage{amsmath}
\usepackage{amssymb}
\usepackage{mathtools}
\usepackage{amsfonts}
\usepackage{bm}
\usepackage{bbm}

\usepackage{algorithm}
\usepackage{algorithmic}

\def\eqref#1{equation~\ref{#1}}

\def\1{\bm{1}}

\DeclareMathAlphabet{\mathsfit}{\encodingdefault}{\sfdefault}{m}{sl}
\SetMathAlphabet{\mathsfit}{bold}{\encodingdefault}{\sfdefault}{bx}{n}

\def\calF{{\mathcal{F}}}
\def\calG{{\mathcal{G}}}
\def\calH{{\mathcal{H}}}

\def\calM{{\mathcal{M}}}

\def\calQ{{\mathcal{Q}}}
\def\calR{{\mathcal{R}}}

\def\calT{{\mathcal{T}}}

\def\calX{{\mathcal{X}}}
\def\calY{{\mathcal{Y}}}
\def\calZ{{\mathcal{Z}}}

\def\bbE{{\mathbb{E}}}

\def\bbN{{\mathbb{N}}}

\def\bbP{{\mathbb{P}}}

\def\bbR{{\mathbb{R}}}

\DeclareMathOperator*{\argmin}{arg\,min}

\newcommand{\p}[1]{\left(#1\right)}
\newcommand{\sqb}[1]{\left[#1\right]}
\newcommand{\cb}[1]{\left\{#1\right\}}

\newcommand{\Bigp}[1]{\Big(#1\Big)}

\newcommand{\abs}[1]{\left|#1\right|}

\newcommand{\norm}[1]{\left\|#1\right\|}

\usepackage{amsthm}

\theoremstyle{plain}

\newtheorem{theorem}{Theorem}[section]

\newtheorem{proposition}[theorem]{Proposition}

\newtheorem{assumption}{Assumption}[section]

\newtheorem*{remark}{Remark}

\usepackage[textsize=tiny]{todonotes}
\usepackage{multirow}
\usepackage{wrapfig}
\usepackage{subfigure}

\renewcommand{\eqref}[1]{(\ref{#1})}

\usepackage{xcolor}
\newcount\Comments  
\newcommand{\kibitz}[2]{\ifnum\Comments=1\textcolor{#1}{#2}\fi}

\usepackage{listings}

\usepackage[capitalize,noabbrev]{cleveref}

\allowdisplaybreaks

\title{Conformal Prediction for Nonparametric Instrumental Regression}

\author{Masahiro Kato\thanks{Email: \texttt{mkato-csecon@g.ecc.u-tokyo.ac.jp}}$\,$}

\affil{The University of Tokyo}

\date{\today}

\begin{document}

\maketitle 

\begin{abstract}
We propose a method for constructing distribution-free prediction intervals in nonparametric instrumental variable regression (NPIV), with finite-sample coverage guarantees. Building on the conditional guarantee framework in conformal inference, we reformulate conditional coverage as marginal coverage over a class of IV shifts $\calF$. Our method can be combined with any NPIV estimator, including sieve 2SLS and other machine-learning-based NPIV methods such as  neural networks minimax approaches. Our theoretical analysis establishes distribution-free, finite-sample coverage over a practitioner-chosen class of IV shifts.
\end{abstract}

\section{Introduction}
Instrumental variables (IVs) are widely used for causal and policy evaluation when regressors are endogenous. We consider structural prediction problems characterized by moment conditions conditional on IVs, for which a representative example is nonparametric IV regression \citep[NPIV,][]{Newey2003instrumentalvariable,Ai2003efficientestimation,Darolles2011nonparametricinstrumental}. A large literature studies how to estimate $h_0$ under ill-posedness, using series, minimum-distance, reproducing kernel Hilbert space (RKHS), neural network, and minimax procedures \citep{Newey2003instrumentalvariable,Hartford2017deepiv,Singh2019kernelinstrumental,Dikkala2020minimaxestimation}.

Here, we briefly introduce the NPIV setup. For details, see Section~\ref{sec:setup}. In NPIV, we typically consider the following data-generating process:
\begin{align}\label{eq:npiv}
Y = h_0(X) + \varepsilon, \qquad \bbE\sqb{\varepsilon \mid Z} = 0,
\end{align}
where $X$ is an endogenous variable, $\varepsilon$ is the error term, $Z$ is an IV, and $h_0(\cdot)$ is called the structural function. While the structural function $h_0(\cdot)$ relates the endogenous variable $X$ to the outcome $Y$, it cannot be estimated by regressing $Y$ on $X$ because of endogeneity, that is, the correlation between $X$ and $\varepsilon$.

In this paper, we study prediction intervals for future outcomes in NPIV. Existing inference methods in NPIV typically target the structural function $h_0$ or its functionals through asymptotic approximations to regularized estimators. Our goal is different. We aim to construct prediction intervals for $Y$ itself, with validity formulated relative to the IV. For example, consider predicting future demand for a good. In conventional approaches, one first estimates the demand curve $h_0$ and then predicts demand $Y$. In that case, one constructs confidence or prediction intervals for $h_0$. In this study, by contrast, we target $Y$ itself, that is, we construct prediction intervals for $Y$, not $h_0$. Our prediction intervals for $Y$ are justified by the IV $Z$.

Conformal prediction provides a natural starting point for this purpose, since it yields distribution-free finite-sample coverage under exchangeability \citep{Vovk2005algorithmiclearning}. While conformal prediction is a convenient approach in many tasks, it is not immediately applicable to NPIV, because exact conditional coverage is impossible \citep{Lei2013distributionfree,Barber2020thelimits}. That is, exact conditional coverage given the IV is impossible in a fully distribution-free, finite-sample sense. To address this issue, we build on the results in \citet{Tibshirani2019conformalprediction} and \citet{Gibbs2025conformalprediction}, where the former considers conformal prediction under covariate shift, and the latter considers conditional guarantees in conformal prediction.

A basic modeling question then arises. In IV problems, what should the final interval depend on? The structural regression itself is a function of $X$ alone, which suggests an $X$-indexed prediction interval $C(X)$ as the most natural predictive object. At the same time, one may also allow the radius to vary jointly with $(X,Z)$ or only with $Z$. These three possibilities have different interpretations and different algorithmic properties. We study conformal procedures for all three cases.

Within that common target, the three choices of radius lead to different methods. When the radius depends jointly on $(X,Z)$, the finite-dimensional conditional conformal machinery of \citet{Gibbs2025conformalprediction} applies directly with the covariate $(X,Z)$, yielding exact finite-sample coverage over joint shifts in $(X,Z)$. When the radius depends only on $Z$, the same machinery specializes to an exact IV-specific procedure, which we call IV-conditional conformal prediction, with coverage over IV shifts. When the radius depends only on $X$, exact finite-sample family-wise calibration is not currently available. To handle that case, we use the importance-weighting logic of \citet{Kato2022learningcausal} to convert the conditional coverage moment into weighted unconditional moments, and then combine that idea with weighted conformal recalibration under a fixed target shift in the spirit of \citet{Tibshirani2019conformalprediction}.

\subsection{Contribution}
We propose methods for prediction intervals in NPIV with distribution-free finite-sample coverage guarantees. In particular, we refer to the IV-specific conformal construction as \emph{IV-CCP}. IV-CCP is based on conformal prediction, in particular, on conditional guarantees through conditional calibration.

The methodological contribution is to transform the NPIV problem with conditional moment restrictions into one under marginal moment restrictions so that we can apply conformal prediction. We recast exact conditional coverage as a moment condition against all measurable reweightings of the IV and relax that impossible target to robust coverage over a user-specified class of IV shifts $\calF$. This approach is closely related to covariate shift \citep{Shimodaira2000improvingpredictive}.

In implementation, we use an estimate of $h_0(x)$ to tighten the prediction intervals. As a basic construction, we first estimate the structural function $h_0(x)$ and then construct prediction intervals by adding a radius to the estimator $\widehat h$. We study three types of radii. The first depends on both $X$ and $Z$, the second depends only on $Z$, and the third depends only on $X$. For each approach, we present a concrete implementation and theoretical guarantees.

\subsection{Related Work}
Our work lies at the intersection of NPIV estimation and conformal prediction. Estimation of causal parameters under conditional moment restrictions is a core problem in causal inference, and NPIV is a representative example. In this problem, the technical difficulties come from identification and stable estimation, including ill-posedness and regularization for $h_0$ \citep{Newey2003instrumentalvariable,Ai2003efficientestimation,Darolles2011nonparametricinstrumental}. In the machine learning literature, several studies use flexible models for the structural function based on neural networks and RKHS, or employ a minimax approach to estimation \citep{Hartford2017deepiv,Singh2019kernelinstrumental,Dikkala2020minimaxestimation}. Our method is agnostic to the particular NPIV models and estimation methods and therefore acts as a predictive wrapper around these estimators rather than a replacement for them.

On the conformal side, the starting point is distribution-free finite-sample validity under exchangeability \citep{Vovk2005algorithmiclearning}. Exact distribution-free conditional coverage is impossible in general \citep{Lei2013distributionfree,Barber2020thelimits}, so recent work has studied structured relaxations, including weighted conformal prediction under a fixed target shift \citep{Tibshirani2019conformalprediction}, localized conformal prediction \citep{Guan2022localizedconformal}, distributional conformal prediction \citep{Chernozhukov2021distributionalconformal}, and conditional guarantees over a family of shifts \citep{Gibbs2025conformalprediction}. The present paper specializes these ideas to IV settings and combines the conformal perspective with the importance-weighted conditional-moment formulation of \citet{Kato2022learningcausal}. Appendix~\ref{appdx:detailed_related_work} provides a fuller discussion.

\section{Setup}
\label{sec:setup}
In this section, we formulate our problem.

\subsection{Variables and Observations}
\paragraph{Variables}
Let $Y \in \calY \subseteq \bbR$ be a scalar outcome, $X \in \calX \subseteq \bbR^{k_X}$ be a $k_X$-dimensional endogenous variable, and $Z \in \calZ \subseteq \bbR^{k_Z}$ be $k_Z$-dimensional IVs, where $\calY$, $\calX$, and $\calZ$ denote the outcome, covariate, and IV spaces, respectively. Assume that $(Y, X, Z)$ jointly follows a distribution $P$.

\paragraph{Observations}
Let $n \in \bbN$ be the sample size. Assume that we observe a dataset $\cb{(Y_i,X_i,Z_i)}^n_{i=1}$, where each $(Y_i, X_i, Z_i)$ is an i.i.d. copy of $(Y, X, Z)$ following the distribution $P$.

\subsection{Prediction under Endogeneity}
This study considers prediction of $Y$ under endogeneity of $X$. We clarify the meaning of endogeneity below.

\paragraph{Recap of identification through conditional moment restrictions}
The conventional approach defines the structural function through identification under conditional moment restrictions \citep{Ai2003efficientestimation,Darolles2011nonparametricinstrumental}. In this approach, we define the causal relationship between $Y$ and $X$ as
\begin{align*}
Y = h_0(X) + \varepsilon,
\end{align*}
where $h_0:\calX\to\calY$ is the structural function and $\varepsilon$ is a sub-Gaussian error term with mean zero. To estimate $h_0$, suppose that the IV $Z$ satisfies the following conditional moment restriction:
\begin{align}
\label{eq:cmc}
\bbE\sqb{\varepsilon_i \mid Z_i} = 0,\quad \forall i \in \cb{1,2,\dots, n}.
\end{align}
Then we assume that $h_0$ is uniquely identified under \eqref{eq:cmc}. The goal of the conventional setup is to estimate $h_0$ from the conditional moment restriction in \eqref{eq:cmc}. If $Z = X$, this problem reduces to estimation of the regression function $\bbE\sqb{Y \mid X}$. However, when $\bbE\sqb{\varepsilon\mid X} \neq 0$, $\bbE\sqb{Y\mid X}$ is not equal to $h_0(X)$, so standard regression methods such as least squares may fail to recover the structural function. In such cases, we use NPIV methods to estimate $h_0$ using IVs $Z$.

\paragraph{Reformulation by conditional prediction given IVs}
The conventional approach relies on identification of the structural function itself. In this study, we do not focus on identification. Instead, we consider prediction of $Y$ using $X$, with validity indexed by the IV $Z$. In this approach, we define the target as a prediction interval
\[\widehat C\colon \calX\times\calZ\to \cb{\text{intervals in }\bbR},\]
which satisfies
\begin{align}
\label{eq:exact_conditional0}
\bbP\p{Y_{n+1}\in \widehat C(X_{n+1},Z_{n+1})\mid Z_{n+1}=z}=1-\alpha,\quad \forall z \in \calZ.
\end{align}
We first define a prediction interval $\widehat C$ that depends on both $X$ and $Z$. We then introduce prediction intervals that depend only on $Z$ or only on $X$. Our main focus is the prediction interval that depends only on $X$. Note that in our proposed method, we also use an estimate of $h_0$ to tighten the prediction intervals, but this is not necessary.

\subsection{Ideal (but Infeasible) Goal: Prediction Interval with Conditional Coverage Guarantee}
In summary, our ideal goal is to predict $Y$ given $X$ while accounting for endogeneity and to obtain a prediction interval $\widehat C\colon \calX\times\calZ\to \cb{\text{intervals in }\bbR}$ satisfying the conditional coverage guarantee \eqref{eq:exact_conditional0}.

For the construction of $\widehat C(X_{n+1},Z_{n+1})$, we allow the use of an estimator $\widehat h\colon \calX\to\calY$ of the structural function $h_0$. This reflects the common application setting in which $X$ is the argument of the prediction rule and $Z$ is used to calibrate validity, not necessarily to serve as a direct input to the point predictor.

In this study, we aim to guarantee distribution-free and finite-sample coverage for the prediction interval. However, distribution-free procedures cannot satisfy \eqref{eq:exact_conditional0} with nontrivial finite-length intervals in general \citep{Lei2013distributionfree,Barber2020thelimits}. We therefore pursue a relaxation in the next section.

\subsection*{Notation}
Let $\bbP$ denote the probability law induced by $P$, and let $\bbE$ denote expectation under $\bbP$. We refer to the environment in which we predict $Y_{n+1}$ as the test environment, and to the corresponding probability law as the test law. Let $\mathrm{len}(C)$ be the length of a prediction interval $C \subseteq \bbR$.

\section{Target of Interest and IV Shift Relaxation}
This section defines the target prediction intervals. We begin with prediction intervals with exact conditional coverage, rewrite this target in moment form, and then relax it to coverage over a class of IV shifts. The resulting target applies to all radius classes considered in the paper.

\subsection{Exact Conditional Coverage}
For a generic interval $\widehat C_\tau$, exact conditional coverage in the IV setting takes the form
\begin{align}
\label{eq:exact_conditional}
\bbP\p{Y_{n+1}\in \widehat C_\tau(X_{n+1},Z_{n+1})\mid Z_{n+1}=z}=1-\alpha,\quad \forall z \in \calZ.
\end{align}
The next proposition rewrites \eqref{eq:exact_conditional} in moment form.

\begin{proposition}[Moment characterization of exact conditional coverage]\label{prop:moment_char}
Equation \eqref{eq:exact_conditional} holds if and only if
\begin{align}
\label{eq:moment_char_generic}
\bbE\sqb{f(Z_{n+1})\Bigp{\mathbbm{1}\sqb{Y_{n+1}\in \widehat C_\tau(X_{n+1},Z_{n+1})}-(1-\alpha)}}=0,
\quad \forall f\in\calM,
\end{align}
where $\calM$ denotes the set of all bounded measurable functions $f\colon \calZ\to\bbR$.
\end{proposition}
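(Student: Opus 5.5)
The argument runs through the conditional coverage function
\[
g(z) := \bbP\p{Y_{n+1}\in \widehat C_\tau(X_{n+1},Z_{n+1})\mid Z_{n+1}=z},
\]
that is, a fixed version of the conditional expectation $\bbE\sqb{\mathbbm{1}\sqb{Y_{n+1}\in \widehat C_\tau(X_{n+1},Z_{n+1})}\mid Z_{n+1}}$ evaluated at $z$. This is a measurable function of $z$ taking values in $[0,1]$. Here the probability is over the calibration data and the test point jointly, so $\widehat C_\tau$ may be random, and the indicator is a bounded random variable. Since conditional probabilities are defined only up to $P_Z$-null sets, I will read \eqref{eq:exact_conditional} as ``$g(z)=1-\alpha$ for $P_Z$-almost every $z$.'' This is the only meaningful interpretation, and I will state it explicitly.

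\textbf{Forward direction.} Assume \eqref{eq:exact_conditional} and take any $f\in\calM$. Because $f$ is bounded, $f(Z_{n+1})\bigl(\mathbbm{1}[\cdot]-(1-\alpha)\bigr)$ is integrable. By the tower property, and by pulling out the $Z_{n+1}$-measurable factor $f(Z_{n+1})$,
\[
\bbE\sqb{f(Z_{n+1})\Bigp{\mathbbm{1}[\cdot]-(1-\alpha)}}=\bbE\sqb{f(Z_{n+1})\Bigp{g(Z_{n+1})-(1-\alpha)}} .
\]
The right-hand side is zero because $g(Z_{n+1})=1-\alpha$ almost surely.

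\textbf{Converse direction.} Assume \eqref{eq:moment_char_generic} holds for all $f\in\calM$, and apply the same tower identity. Choose the test function
\[
f(z)=g(z)-(1-\alpha).
\]
It is bounded (by $1$) and measurable, so it lies in $\calM$. The identity then gives
\[
\bbE\sqb{\Bigp{g(Z_{n+1})-(1-\alpha)}^2}=0,
\]
hence $g(Z_{n+1})=1-\alpha$ almost surely, which is \eqref{eq:exact_conditional} in the almost-everywhere sense. An alternative is to use the indicators $f=\mathbbm{1}_{\{g>1-\alpha\}}$ and $f=\mathbbm{1}_{\{g<1-\alpha\}}$, which show that both sets have $P_Z$-measure zero.

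\textbf{Main subtlety.} The computation itself is routine. The only point needing care is interpretive: the ``$\forall z\in\calZ$'' in \eqref{eq:exact_conditional} must be understood almost surely (or for a chosen version of the conditional probability). Pointwise equality at every $z$ cannot be recovered from moment conditions, because moments do not see $P_Z$-null sets. I would add a one-line remark making this identification explicit, after which the equivalence is exact.
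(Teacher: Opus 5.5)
Your proof is correct and follows the paper's argument. The forward direction is the tower property, and the converse tests against bounded functions of $Z_{n+1}$ to force $\bbE[\mathbbm{1}[\cdot]-(1-\alpha)\mid Z_{n+1}]=0$ almost surely; the paper uses indicators $\mathbbm{1}[z\in A]$ of arbitrary measurable sets, while your choice $f=g-(1-\alpha)$ gets there in one step, and like you it concludes the identity only almost surely, so your explicit remark that ``$\forall z$'' must be read $P_Z$-a.e.\ is consistent with its proof.
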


Proposition~\ref{prop:moment_char} shows that exact conditional coverage is equivalent to marginal coverage weighted by every measurable function of the IV. This equivalence also clarifies the impossibility result. Exact conditional coverage asks for validity after every possible reweighting of $Z$, which is too demanding in a distribution-free finite-sample setting. \citet{Lei2013distributionfree} shows that conditional guarantees of this type cannot generally be achieved by nontrivial finite-length prediction sets, and \citet{Barber2020thelimits} sharpens this message by showing that even approximate conditional coverage requires intervals that are essentially as conservative as those built for much larger subsets. The conditioning variable is the IV rather than a generic predictor, but the impossibility itself is unchanged.

\subsection{Relaxation via IV Shifts}
We therefore replace the full measurable class $\calM$ in \eqref{eq:moment_char_generic} with a user-specified class $\calF\subset \calM$. This yields the relaxed requirement
\begin{align}
\label{eq:moments_F}
\bbE\sqb{f(Z_{n+1})\Bigp{\mathbbm{1}\sqb{Y_{n+1}\in \widehat C_\tau(X_{n+1},Z_{n+1})}-(1-\alpha)}}\ge 0,
\quad \forall f\in\calF.
\end{align}
Following \citet{Gibbs2025conformalprediction}, \eqref{eq:moments_F} can be interpreted as robust marginal coverage over a family of IV shifts. For any nonnegative $f\in\calF$ with $\bbE\sqb{f(Z)}>0$, define a tilted distribution
\begin{align}
\label{eq:tilt_family}
\frac{d\bbP_f(y,x,z)}{d\bbP(y,x,z)} = \frac{f(z)}{\bbE\sqb{f(Z)}}.
\end{align}
Then \eqref{eq:moments_F} implies
\[
\bbP_f\p{Y_{n+1}\in \widehat C_\tau(X_{n+1},Z_{n+1})}\ge 1-\alpha
\]
for every such $f$. In applications with IVs, this interpretation is natural. Policies often change the marginal distribution of the IV while leaving the structural relationship \eqref{eq:npiv} intact. The one-sided inequality in \eqref{eq:moments_F} is enough for lower coverage guarantees, which is the finite-sample target throughout the paper.

\subsection{Prediction Intervals and Radius Classes}
In constructing prediction intervals $\widehat C_\tau(x,z)$, we allow the use of an estimator $\widehat h$ of the structural function $h_0$. Let $\widehat h\colon \calX\to\calY$ denote a predictor of $h_0$ obtained from some NPIV method. Then we define the general family of intervals as
\begin{align}
\label{eq:general_interval}
\widehat C_\tau(x,z)=\sqb{\widehat h(x)-\widehat\tau(x,z),\widehat h(x)+\widehat\tau(x,z)},
\end{align}
where $\widehat\tau\colon \calX\times\calZ\to\bbR_+$ is a nonnegative radius function.

Based on their dependence on the variables, we distinguish the following three radius classes:
\begin{align*}
\calT_{XZ}&\coloneqq \cb{\tau\colon \calX\times\calZ\to\bbR_+\text{ measurable}},\\
\calT_Z&\coloneqq \cb{\tau\colon \exists\tau_Z\colon \calZ\to\bbR_+\text{ such that }\tau(x,z)=\tau_Z(z)},\\
\calT_X&\coloneqq \cb{\tau\colon \exists\tau_X\colon \calX\to\bbR_+\text{ such that }\tau(x,z)=\tau_X(x)}.
\end{align*}
The broad class $\calT_{XZ}$ lets the radius vary jointly with $(X,Z)$ and serves as the common parent class for the later specializations. The IV-indexed class $\calT_Z$ keeps the center equal to $\widehat h(X)$ while allowing the radius to adapt to the IV. The $X$-indexed class $\calT_X$ yields a prediction interval of the form
\[
\widehat C_X(x)=\sqb{\widehat h(x)-\widehat\tau_X(x),\widehat h(x)+\widehat\tau_X(x)}.
\]
The common special case $\calT_X\cap\calT_Z$ is the constant-radius split conformal rule.

The distinction among the three classes is substantive. The class $\calT_{XZ}$ is the most expressive benchmark. The class $\calT_Z$ is IV-specific, because the IV indexes predictive uncertainty while the center remains a function of $X$ alone. The class $\calT_X$ yields prediction intervals $C(X)$ that depend only on $X$, which is the most natural class from the NPIV motivation. This paper studies all three classes, with the main methodological focus on $\calT_X$.

\subsection{A Radius-Class Oracle Problem}
The relaxation \eqref{eq:moments_F} is common across the three radius classes, but their feasible sets differ. Let $\calT\subset \calT_{XZ}$ be any radius class, and define the oracle problem
\begin{align}
\label{eq:oracle_radius_problem}
&\inf_{\tau\in\calT} \bbE\sqb{\tau(X,Z)}\\
\label{eq:oracle_radius_constraint}
&\text{s.t.}\ \ \bbE\sqb{f(Z)\Bigp{\mathbbm{1}\sqb{\abs{Y-h_0(X)}\le \tau(X,Z)}-(1-\alpha)}}\ge 0,
\quad \forall f\in\calF.
\end{align}
Let $L(\calT,\calF)$ denote the infimum in \eqref{eq:oracle_radius_problem} under the constraint \eqref{eq:oracle_radius_constraint}, that is,
$L(\calT,\calF)=\inf\cb{\bbE\sqb{\tau(X,Z)}:\tau\in\calT \text{ satisfies }\eqref{eq:oracle_radius_constraint}}$.
Here $L(\calT,\calF)$ is the oracle expected half-length within the radius class $\calT$ under the shift-robust target generated by $\calF$.

\begin{proposition}[Radius-class nesting]\label{prop:radius_nesting}
Suppose $\calT_1\subset \calT_2\subset \calT_{XZ}$ and that the feasible set defined by \eqref{eq:oracle_radius_problem} and \eqref{eq:oracle_radius_constraint} is nonempty. Then we have
\[
L(\calT_2,\calF)\le L(\calT_1,\calF).
\]
In particular, we have
\[
L(\calT_{XZ},\calF)\le L(\calT_Z,\calF),\qquad L(\calT_{XZ},\calF)\le L(\calT_X,\calF).
\]
\end{proposition}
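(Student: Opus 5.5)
The argument is monotonicity of an infimum under enlargement of the feasible set. For a radius class $\calT\subset\calT_{XZ}$, write
\[
S(\calT)=\cb{\tau\in\calT:\tau\text{ satisfies }\eqref{eq:oracle_radius_constraint}},
\]
so that $L(\calT,\calF)=\inf\cb{\bbE\sqb{\tau(X,Z)}:\tau\in S(\calT)}$.

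The key observation is that the constraint \eqref{eq:oracle_radius_constraint} involves only the function $\tau$ itself, through the law of $(Y,X,Z)$ and the fixed class $\calF$. It makes no reference to the ambient class $\calT$. Consequently, if $\tau\in\calT_1$ is feasible, then $\tau\in\calT_2$ because $\calT_1\subset\calT_2$, and it satisfies the same constraint. Hence $S(\calT_1)\subset S(\calT_2)$. The objective $\tau\mapsto\bbE\sqb{\tau(X,Z)}$ is likewise independent of the class. It is well defined in $[0,\infty]$, since every $\tau$ is nonnegative and measurable.

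From here the inequality follows by the standard fact that enlarging a set can only decrease its infimum: for every $\tau\in S(\calT_1)$ we have $L(\calT_2,\calF)\le\bbE\sqb{\tau(X,Z)}$, and taking the infimum over $S(\calT_1)$ gives $L(\calT_2,\calF)\le L(\calT_1,\calF)$.

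The nonemptiness hypothesis needs only brief comment, and we read it as referring to $S(\calT_1)$. If $S(\calT_1)=\emptyset$, the convention $\inf\emptyset=+\infty$ makes the inequality trivially true, so the hypothesis merely keeps both sides meaningful as finite or infinite values rather than degenerate ones. No step is a genuine obstacle. The only point requiring care is to state explicitly that feasibility does not depend on the ambient class, which we have done above.

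For the particular cases, we check the inclusions $\calT_Z\subset\calT_{XZ}$ and $\calT_X\subset\calT_{XZ}$. Suppose $\tau(x,z)=\tau_Z(z)$ with $\tau_Z\colon\calZ\to\bbR_+$. We take $\tau_Z$ to be measurable, as is implicit in the definition, so that the composition $(x,z)\mapsto\tau_Z(z)$ with the coordinate projection is a measurable nonnegative function on $\calX\times\calZ$. The same argument applies to $\tau_X$. Applying the general inequality with $(\calT_1,\calT_2)=(\calT_Z,\calT_{XZ})$ and with $(\calT_1,\calT_2)=(\calT_X,\calT_{XZ})$ yields the two displayed inequalities.
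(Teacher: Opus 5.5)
Your proof is correct and matches the paper's argument: every radius feasible in $\calT_1$ is feasible in $\calT_2$, so the infimum over the larger feasible set cannot be larger, and the particular cases follow from $\calT_Z,\calT_X\subset\calT_{XZ}$. Your added remarks on the convention $\inf\emptyset=+\infty$ and on the measurability of $\tau_Z,\tau_X$ (needed for those inclusions) are sensible refinements that the paper leaves implicit.
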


Proposition~\ref{prop:radius_nesting} clarifies the trade-off. The class $\calT_{XZ}$ is the broadest feasible class and therefore the easiest one in which to reduce oracle length. The classes $\calT_Z$ and $\calT_X$ are more restrictive, but they have sharper substantive interpretations. The class $\calT_Z$ treats the IV as an index of residual uncertainty. The class $\calT_X$ yields prediction intervals $C(X)$ depending only on $X$, which is the most natural class from the NPIV motivation.

\subsection{Connection to Minimax and Importance Weighting}
The constrained problem defined by \eqref{eq:oracle_radius_problem} and \eqref{eq:oracle_radius_constraint} has the same minimax flavor as conditional-moment formulations in NPIV. In the continuum-of-moments and linear-inverse viewpoints of \citet{Carrasco2000generalizationof} and \citet{Carrasco2007linearinverse}, conditional moments are converted into a family of unconditional moments indexed by a function class. More recent adversarial procedures, such as DeepGMM and the minimax-optimal estimators of \citet{Bennett2019deepgeneralized} and \citet{Dikkala2020minimaxestimation}, estimate $h_0$ by searching for moment violations over a critic class.

Our use of the class $\calF$ is different. We are not trying to identify $h_0$ by finding moments that are most informative about the structural equation. Instead, $h_0$ or its estimator $\widehat h$ is treated as fixed, and the radius $\tau$ is chosen so that the coverage residual $\mathbbm{1}\sqb{Y\in \widehat C_\tau(X,Z)}-(1-\alpha)$ has nonnegative moments against every $f\in\calF$. Thus, $\calF$ is not a critic class for structural estimation.

This viewpoint also connects naturally to importance weighting. Equation \eqref{eq:tilt_family} rewrites the relaxed coverage target as a family of ordinary marginal coverage statements under weighted laws. This is the same algebraic mechanism that underlies covariate-shift correction in \citet{Shimodaira2000improvingpredictive} and importance-weighting formulations for NPIV such as \citet{Kato2022learningcausal}. In our setting, however, the weights define the probability laws over which coverage is required.

\section{IV-Conditional Conformal Prediction}
In this section, we propose IV-conditional conformal prediction (IV-CCP). The details differ across the three target radius classes, $\calT_{XZ}$, $\calT_Z$, and $\calT_X$.

\subsection{Split Conformal Prediction for NPIV}
In all cases, we construct prediction intervals using a split-type implementation of conformal prediction. We first split the sample into a training set $\mathcal{I}_{\mathrm{tr}}$ and a calibration set $\mathcal{I}_{\mathrm{cal}}$ with $|\mathcal{I}_{\mathrm{cal}}|=m$. We then fit any NPIV estimator $\widehat h$ on $\mathcal{I}_{\mathrm{tr}}$. On the calibration set, we compute the following nonconformity scores:
\begin{align}
\label{eq:scores}
S_i=\abs{Y_i-\widehat h(X_i)},\qquad i\in\mathcal{I}_{\mathrm{cal}}.
\end{align}
Conditional on the training split, $\widehat h$ is fixed, so the calibration scores and the test score are exchangeable.

\subsection{IV-CCP with \texorpdfstring{$(X, Z)$}{XZ}-Indexed Radius}
The first method studies the broad benchmark class $\calT_{XZ}$. This class is not the main target of the paper, but it is mathematically convenient because the same variable $(X,Z)$ indexes both the radius and the shift family.

\paragraph{Class of the joint shift.}
Let $W=(X,Z)$ and let $\phi_W\colon \calX\times\calZ\to\bbR^d$ be a feature map. Define
\[
\calG_W=\cb{w\mapsto \beta^\top\phi_W(w): \beta\in\bbR^d},\qquad
\calF_W=\calG_W\cap\cb{f\ge 0}.
\]
Throughout this section, we take the first component of $\phi_W$ to be the constant $1$, so $\calG_W$ contains the constant functions. The class $\calG_W$ determines how the radius may vary with $(X,Z)$, and $\calF_W$ determines the family of joint distribution shifts.

\paragraph{Calibration.}
Fix a test pair $w=(x,z)$ and a candidate radius $s\ge 0$. Denote the pinball loss by
\[\rho_\alpha(u)=\alpha\max\cb{u,0}+(1-\alpha)\max\cb{-u,0}.\]
Using the calibration pairs $(W_i,S_i)$ and the augmented point $(w,s)$, consider
\begin{align*}
\widehat g_s\in\argmin_{g\in\calG_W}
\frac{1}{m+1}\sum_{i\in\mathcal{I}_{\mathrm{cal}}}\rho_\alpha\p{S_i-g(W_i)}+
\frac{1}{m+1}\rho_\alpha\p{s-g(w)}.
\end{align*}
This is the finite-dimensional conditional conformal machinery of \citet{Gibbs2025conformalprediction} with covariate $W=(X,Z)$. Form the LP dual corresponding to this augmented quantile regression problem, and let $\eta_{m+1}(s)$ denote the dual coordinate associated with the augmented point. As in \citet{Gibbs2025conformalprediction}, $\eta_{m+1}(s)$ is nondecreasing in $s$, and $s$ belongs to the conformal upper set if and only if $\eta_{m+1}(s)<1-\alpha$. We therefore define $\widehat\tau_{XZ}(x,z)$ as the largest accepted value of $s$ and output
\begin{align}
\label{eq:joint_interval}
\widehat C_{XZ}(x,z)=\sqb{\widehat h(x)-\widehat\tau_{XZ}(x,z),\widehat h(x)+\widehat\tau_{XZ}(x,z)}.
\end{align}
For any nonnegative $f\in\calF_W$ with $\bbE\sqb{f(W)}>0$, define the joint tilted law
\begin{align}
\label{eq:joint_tilt}
\frac{d\bbP^{W}_f(y,x,z)}{d\bbP(y,x,z)}=\frac{f(x,z)}{\bbE\sqb{f(W)}}.
\end{align}
The finite-sample guarantee for this class is stated in Theorem~\ref{thm:joint_shift_robust}.

\subsection{IV-CCP with \texorpdfstring{$Z$}{Z}-Indexed Radius}
The second method studies the IV-indexed class $\calT_Z$. This is the exact IV-specific conformal class, and it is the first main focus of the paper. The interval center remains $\widehat h(X)$, while the radius depends on $Z$ alone.

\paragraph{Class of the IV shift.}
Let $\phi\colon \calZ\to\bbR^d$ be a feature map, and define
\[
\calG=\cb{z'\mapsto \beta^\top\phi(z'): \beta\in\bbR^d},\qquad
\calF=\calG\cap\cb{f\ge 0}.
\]
Throughout the sequel, we take the first component of $\phi$ to be the constant $1$, so $\calG$ contains the constant functions. Fix a test IV $z$ and a candidate radius $s\ge 0$.

\paragraph{Calibration.}
Using the same scores \eqref{eq:scores}, consider
\begin{align*}
\widehat g_s\in\argmin_{g\in\calG}
\frac{1}{m+1}\sum_{i\in\mathcal{I}_{\mathrm{cal}}}\rho_\alpha\p{S_i-g(Z_i)}+
\frac{1}{m+1}\rho_\alpha\p{s-g(z)}.
\end{align*}
This is the augmented quantile-regression problem of \citet{Gibbs2025conformalprediction} with covariate $Z$. Let $\eta_{m+1}(s)$ denote the dual variable associated with the augmented point. As before, $\eta_{m+1}(s)$ is nondecreasing in $s$, and the largest accepted value of $s$ defines the radius $\widehat\tau_Z(z)$. The resulting IV-indexed interval is
\begin{align}
\label{eq:iv_indexed_interval}
\widehat C_Z(x,z)=\sqb{\widehat h(x)-\widehat\tau_Z(z),\widehat h(x)+\widehat\tau_Z(z)}.
\end{align}

\begin{algorithm}[t]
\caption{IV-CCP interval at a test point $(x,z)$}
\label{alg:ivccp}
\begin{algorithmic}[1]
\STATE Input: training data $\mathcal{I}_{\mathrm{tr}}$, calibration data $\mathcal{I}_{\mathrm{cal}}$, level $\alpha$, features $\phi$
\STATE Fit NPIV on $\mathcal{I}_{\mathrm{tr}}$ to get $\widehat h$
\STATE Compute calibration scores $S_i=\abs{Y_i-\widehat h(X_i)}$ for $i\in\mathcal{I}_{\mathrm{cal}}$
\STATE For a test point $z$, compute $\widehat\tau_Z(z)$ via binary search over $s\ge 0$ using the LP dual membership test
\STATE Return: $\widehat C_Z(x,z)=\sqb{\widehat h(x)-\widehat\tau_Z(z),\widehat h(x)+\widehat\tau_Z(z)}$
\end{algorithmic}
\end{algorithm}

\paragraph{Interpretation}
The class $\calT_Z$ is less flexible than $\calT_{XZ}$, but it restores the IV-specific interpretation that the relevant change occurs in the IV distribution. The interval center remains a function of $X$ alone. What changes with $Z$ is the residual uncertainty around that center. This is compatible with the exclusion restriction in \eqref{eq:npiv}, because the IV does not enter the structural center directly.

\paragraph{Relationship to weighted conformal prediction.}
Weighted conformal prediction under covariate shift \citep{Tibshirani2019conformalprediction} protects against one fixed target distribution, provided the corresponding density ratio is known or can be estimated. IV-CCP is different. It protects simultaneously against every IV tilt in the class $\calF$. In other words, it is a guarantee over a family of shifts rather than a guarantee for a single shift. This distinction is central in IV applications, where the relevant uncertainty is often not one fully specified probability law in a setting of interest but a set of plausible changes in the IV distribution.

\subsection{IV-CCP with \texorpdfstring{$X$}{X}-Indexed Radius}
The third method studies the class $\calT_X$, which produces prediction intervals $C(X)$ depending only on $X$. This is the most natural target in NPIV, because the final prediction rule depends on the structural regressor alone. It is also the second main focus of the paper. The difficulty is that the interval is indexed by $X$, while the robustness target is indexed by shifts in $Z$.

\paragraph{IV-CCP for the $X$-indexed interval}
In this case, we consider a prediction interval of the following form:
\begin{align}
\label{eq:X_indexed_interval}
\widehat C_X(x)=\sqb{\widehat h(x)-\widehat\tau_X(x),\widehat h(x)+\widehat\tau_X(x)}.
\end{align}
Let us define the nonconformity score as
\[
S=\abs{Y-\widehat h(X)}.
\]
For a candidate radius function $\tau_X\colon \calX\to\bbR_+$, define the coverage residual
\begin{align}
\label{eq:xindexed_moment_function}
m_{\tau_X}(S,X)=\mathbbm{1}\sqb{S\le \tau_X(X)}-(1-\alpha).
\end{align}
Then exact conditional coverage for $\widehat C_X$ is equivalent to
\begin{align}
\label{eq:xindexed_conditional_moment}
\bbE\sqb{m_{\tau_X}(S,X)\mid Z=z}=0,\quad \forall z.
\end{align}
The relaxed family-of-shifts target becomes
\begin{align}
\label{eq:xindexed_F_moment}
\bbE\sqb{f(Z)m_{\tau_X}(S,X)}\ge 0,\quad \forall f\in\calF.
\end{align}
The challenge is to construct an $X$-indexed radius while the protected shift family remains a class of functions of $Z$.

\paragraph{Importance-weighted representation}
The importance-weighting approach of \citet{Kato2022learningcausal} provides a natural route. Suppose the conditional density ratio
\begin{align}
\label{eq:ratio_sx_z}
r_0(s,x\mid z)=\frac{p_{S,X\mid Z}(s,x\mid z)}{p_{S,X}(s,x)}
\end{align}
is well defined. Then conditional moments of $(S,X)$ given $Z$ can be rewritten as weighted unconditional moments.

\begin{proposition}[Importance-weighted identity for the $X$-indexed class]\label{prop:iw_identity}
Suppose the conditional density ratio \eqref{eq:ratio_sx_z} exists. Then for any measurable $\tau_X\colon \calX\to\bbR_+$,
\begin{align}
\label{eq:iw_identity}
\bbE\sqb{m_{\tau_X}(S,X)\mid Z=z}=\bbE\sqb{m_{\tau_X}(S,X)r_0(S,X\mid z)},\quad \forall z.
\end{align}
Consequently, \eqref{eq:xindexed_conditional_moment} is equivalent to
\begin{align}
\label{eq:iw_exact_target}
\bbE\sqb{m_{\tau_X}(S,X)r_0(S,X\mid z)}=0,\quad \forall z.
\end{align}
\end{proposition}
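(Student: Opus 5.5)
The identity \eqref{eq:iw_identity} is a change-of-measure computation. We write the conditional expectation as an integral against the conditional density of $(S,X)$ given $Z=z$, then multiply and divide by the marginal density $p_{S,X}$. Throughout, $\widehat h$ is treated as fixed (we condition on the training split), so $S=\abs{Y-\widehat h(X)}$ is a fixed measurable function of $(Y,X)$. Since $\tau_X$ depends only on $X$, the residual $m_{\tau_X}(S,X)$ is a bounded measurable function of $(S,X)$ alone, taking values in $[-(1-\alpha),\alpha]$. Existence of the ratio \eqref{eq:ratio_sx_z} means, as we interpret it, that for each $z$ the conditional law of $(S,X)$ given $Z=z$ is absolutely continuous with respect to the marginal law of $(S,X)$, with Radon--Nikodym derivative $r_0(\cdot,\cdot\mid z)$. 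This is the same setup as the importance-weighting formulation of \citet{Kato2022learningcausal}.

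The key steps, in order, are as follows. First, for fixed $z$, write
\[
\bbE\sqb{m_{\tau_X}(S,X)\mid Z=z}=\int m_{\tau_X}(s,x)\,p_{S,X\mid Z}(s,x\mid z)\,d(s,x).
\]
Second, on the set where $p_{S,X}(s,x)>0$, insert the factor $p_{S,X}(s,x)/p_{S,X}(s,x)$. This turns the integrand into
\[
m_{\tau_X}(s,x)\,r_0(s,x\mid z)\,p_{S,X}(s,x),
\]
and the resulting integral is exactly $\bbE\sqb{m_{\tau_X}(S,X)r_0(S,X\mid z)}$, where the expectation is taken with respect to the marginal law of $(S,X)$ only and $z$ is held fixed. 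Third, justify that the integrals are finite: $m_{\tau_X}$ is bounded and $r_0(\cdot\mid z)\ge 0$ integrates to $1$ against $p_{S,X}$, so both sides are well defined for every $z$.

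The only delicate point, and the step I expect to need the most care, is the null set where $p_{S,X}=0$. There we must check that $p_{S,X\mid Z}(\cdot\mid z)$ also vanishes, so that no mass is lost when dividing. This follows for almost every $z$ from
\[
p_{S,X}(s,x)=\int p_{S,X\mid Z}(s,x\mid z')\,dP_Z(z'),
\]
and for every $z$ it is guaranteed by the assumption that $r_0$ is well defined, which is exactly the required absolute continuity. I would state the result as holding for every $z$ in the sense given by this assumption, which matches the paper's ``$\forall z$''. I would also remark that $S$ and $X$ need not have a joint Lebesgue density: the argument is unchanged if densities are read as Radon--Nikodym derivatives with respect to a common dominating measure.

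Finally, the equivalence between \eqref{eq:xindexed_conditional_moment} and \eqref{eq:iw_exact_target} is immediate. By \eqref{eq:iw_identity}, the left-hand sides of the two displays coincide pointwise in $z$, so one vanishes for all $z$ if and only if the other does.
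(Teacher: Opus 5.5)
Your proposal is correct and matches the paper's proof: both write the weighted expectation as an integral against $p_{S,X}$, cancel that density against the denominator of $r_0$, and recover $\bbE\sqb{m_{\tau_X}(S,X)\mid Z=z}$. Your extra care about the null set $\{p_{S,X}=0\}$ and about integrability spells out what the paper's proof takes for granted by assuming the ratio is well defined.
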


Proposition~\ref{prop:iw_identity} is the key bridge from the conditional coverage target to unconditional weighted moments. It is exactly the same algebraic step that \citet{Kato2022learningcausal} uses for general conditional moment learning, now applied to the interval residual $m_{\tau_X}(S,X)$.

To turn this identity into a learning criterion, let $\calR_X\subset \cb{\tau_X\colon \calX\to\bbR_+}$ denote a chosen model class for the radius function, for example a sieve span, a neural network class, or a shape-restricted class. Let $\widehat r(s,x\mid z)$ be an estimated version of \eqref{eq:ratio_sx_z}, obtained on the training split or by cross-fitting. Let $\widetilde Z_1,\dots,\widetilde Z_M$ be evaluation points, for example the calibration IVs themselves or a deterministic grid. Because the indicator in \eqref{eq:xindexed_moment_function} is discontinuous, it is convenient in optimization to replace it with a smooth surrogate. Let $\psi_\kappa\colon \bbR\to\bbR$ be a smooth approximation to $u\mapsto \mathbbm{1}\sqb{u\ge 0}-(1-\alpha)$, indexed by a smoothing parameter $\kappa>0$.

We then consider the criterion
\begin{align}
\label{eq:iw_objective_sample}
\widehat Q_n(\tau_X)=\frac{1}{M}\sum_{j=1}^M\Bigg(\frac{1}{m}\sum_{i\in\mathcal I_{\mathrm{cal}}}\psi_\kappa\p{\tau_X(X_i)-S_i}\widehat r\p{S_i,X_i\mid \widetilde Z_j}\Bigg)^2
\end{align}
and the estimator
\begin{align}
\label{eq:iw_tau_estimator}
\widehat\tau_X\in\argmin_{\tau_X\in\calR_X}\cb{\frac{1}{m}\sum_{i\in\mathcal I_{\mathrm{cal}}}\tau_X(X_i)+\lambda\widehat Q_n(\tau_X)},
\end{align}
where $\lambda>0$ balances average interval length and weighted moment fit. The resulting prediction interval is \eqref{eq:X_indexed_interval}. This procedure is not distribution-free in finite samples, because it relies on density-ratio estimation and surrogate optimization. Its role is different. It provides a principled way to construct an $X$-indexed radius under the same conditional-moment logic that underlies NPIV.

\paragraph{Single-shift weighted conformal recalibration}
The importance-weighted procedure above learns the shape of the $X$-indexed radius. To obtain a finite-sample guarantee for a fixed target distribution shift while keeping the final interval $X$-indexed, one can add a final scalar conformal recalibration step in the spirit of \citet{Tibshirani2019conformalprediction}. Let $q\colon \calX\to\bbR_+$ be a positive function learned on a sample independent of the final recalibration split, for example by \eqref{eq:iw_tau_estimator}, or by any other method. Let $\mathcal I_{\mathrm{rcal}}$ denote the final recalibration split. Consider intervals of the form
\begin{align}
\label{eq:xindexed_scaled_interval}
\widehat C_{X,f_0}(x)=\sqb{\widehat h(x)-\widehat t_{f_0}q(x),\widehat h(x)+\widehat t_{f_0}q(x)},
\end{align}
where $f_0\in\calF$ is a fixed target shift. Form the normalized scores as
\begin{align}
\label{eq:normalized_scores_tx}
R_i=\frac{\abs{Y_i-\widehat h(X_i)}}{q(X_i)},\qquad i\in\mathcal I_{\mathrm{rcal}}.
\end{align}
The weights are the density ratios induced by the target shift, which are defined as
\begin{align}
\label{eq:single_shift_weights}
w_{f_0}(z)=\frac{f_0(z)}{\bbE\sqb{f_0(Z)}}.
\end{align}
Assume furthermore that there is a known finite constant $B_{f_0}$ such that
\[
w_{f_0}(z)\le B_{f_0}\qquad \text{for all $z$ in the support of $Z$.}
\]
Define $\widehat t_{f_0}$ as the conservative weighted split conformal cutoff obtained from the calibration weights $w_{f_0}(Z_i)$ and the worst-case test weight $B_{f_0}$, that is,
\[
\widehat t_{f_0}=\inf\cb{t\in\bbR_+: \sum_{i\in\mathcal I_{\mathrm{rcal}}}\frac{w_{f_0}(Z_i)}{\sum_{j\in\mathcal I_{\mathrm{rcal}}}w_{f_0}(Z_j)+B_{f_0}}\mathbbm{1}\sqb{R_i\le t}\ge 1-\alpha},
\]
with the convention $\inf\emptyset=\infty$. By construction, the cutoff is scalar and the final interval remains a function of $X$ only. Because it upper bounds the test-point-specific weighted split conformal cutoff of \citet{Tibshirani2019conformalprediction}, it yields the following finite-sample guarantee under the single target law $\bbP_{f_0}$:
\[
\bbP_{f_0}\p{Y_{n+1}\in \widehat C_{X,f_0}(X_{n+1})}\ge 1-\alpha,
\]
where the detailed statement is shown in Theorem~\ref{thm:weighted_conformal_tx}. 

This coverage guarantee is intentionally narrower than the guarantees for $\calT_{XZ}$ and $\calT_Z$. It holds for one specified distribution shift rather than uniformly over a whole class. This reflects the current methodological gap. Family-wise exact finite-sample coverage for the mixed-index target $C(X)$ is not presently available, whereas a conservative single-shift weighted conformal recalibration is available.

\subsection{Discussion}
The three classes now fall into place. The benchmark class $\calT_{XZ}$ is the broad exact conformal class. The class $\calT_Z$ is the exact IV-specific class that delivers simultaneous finite-sample coverage over a family of IV tilts. The class $\calT_X$ is the primary target of interest, but it currently requires a different strategy, namely, importance-weighted conditional-moment learning and, when needed, a final single-shift weighted conformal recalibration.

\section{Designing the IV Shift Class \texorpdfstring{$\calF$}{F}}
The choice of $\calF$ is the main modeling decision in the paper. Through \eqref{eq:tilt_family}, it determines the ambiguity set of shifted distributions under which coverage is required. Accordingly, $\calF$ should encode how the marginal distribution of the IV may change, not how $Y$ depends on $Z$, and not how $\widehat h$ is estimated.

\subsection{From Feature Maps to Ambiguity Sets}
For the exact classes $\calT_{XZ}$ and $\calT_Z$, the shift class is induced by a finite-dimensional feature map. For $\calT_Z$, with feature map $\phi\colon \calZ\to\bbR^d$, define
\[
\calG_\phi=\cb{z\mapsto \beta^\top\phi(z):\beta\in\bbR^d},\qquad
\calF_\phi=\calG_\phi\cap\cb{f\ge 0}.
\]
The corresponding ambiguity set is
\[
\calQ(\calF_\phi)=\cb{\bbP_f: \frac{d\bbP_f(y,x,z)}{d\bbP(y,x,z)}=\frac{f(z)}{\bbE\sqb{f(Z)}},\ f\in\calF_\phi,\ \bbE\sqb{f(Z)}>0}.
\]
If $\calF_1\subset \calF_2$, then $\calQ(\calF_1)\subset \calQ(\calF_2)$. Enlarging $\calF$ therefore protects against a broader family of shifted distributions. If the constant function belongs to $\calF$, then the training distribution itself belongs to the ambiguity set. In practice, this is enforced by taking the first component of $\phi$ to be $1$.

The gain in robustness is not free. In the exact IV-indexed class, Theorem~\ref{thm:length} shows that the calibration contribution to interval length scales with $d/(m+1)$. Thus, the design problem is to choose the smallest class $\calF$ whose ambiguity set still contains the distribution shifts that matter scientifically or empirically.

\subsection{Finite-Dimensional Templates for Exact Conformal Classes}
\paragraph{Indicator and partition classes}
If $Z$ is discrete, or if the user is willing to discretize a continuous IV, the simplest construction is an indicator basis. For a partition $\Pi=\cb{\calZ_1,\dots,\calZ_G}$ of the IV space, take
\[
\phi(z)=\p{1,\mathbbm{1}\sqb{z\in \calZ_1},\dots,\mathbbm{1}\sqb{z\in \calZ_G}}.
\]
Then the induced tilts are constant within cells, and the fitted radius is piecewise constant. This is the natural construction for binary encouragement designs, judge identifiers, hospital identifiers, or other categorical IVs.

The same idea extends to overlapping groups. If $G_1,\dots,G_L\subset \calZ$ are policy-relevant subsets, one may take
\[
\phi(z)=\p{1,\mathbbm{1}\sqb{z\in G_1},\dots,\mathbbm{1}\sqb{z\in G_L}}.
\]
The resulting guarantee holds for every group indicator and every nonnegative linear combination of them. When the groups overlap, the fitted radius is constant on the atoms generated by the overlap pattern.

\paragraph{Multiscale discretization}
For a continuous scalar IV, one may include indicators for a coarse partition together with indicators for a finer nested partition. This allows the ambiguity set to contain both global reweightings and more local perturbations of the IV law. When $Z$ is irregular or mixed discrete-continuous, tree-based or forest-based partitions play the same role. A tree learned on $\mathcal I_{\mathrm{tr}}$ induces leaves $L_1,\dots,L_G$, and one can use leaf indicators $\mathbbm{1}\sqb{z\in L_g}$ as features. This yields an adaptive discretization of the IV space.

\paragraph{Series and sieve classes}
When $Z$ is continuous and low-dimensional, a natural alternative is a smooth series basis defined as
\[
\phi(z)=\p{1,\psi_1(z),\dots,\psi_d(z)},
\]
where $\psi_1,\dots,\psi_d$ may be spline basis functions, orthogonal polynomials, wavelets, or trigonometric terms. The resulting class protects against smooth density-ratio perturbations that can be represented, or well approximated, by the sieve span. This is the same approximation logic that underlies sieve NPIV estimation \citep{Chen2012estimationof,Chen2015sievewald}.

For a vector IV $Z=(Z_1,\dots,Z_{k_Z})$, one may use additive or tensor-product sieves. An additive construction takes the form
\[
\phi(z)=\Bigp{1,\psi_{1,1}(z_1),\dots,\psi_{1,d_1}(z_1),\dots,\psi_{k_Z,1}(z_{k_Z}),\dots,\psi_{k_Z,d_{k_Z}}(z_{k_Z})},
\]
while a tensor-product construction adds selected interaction terms of the form
\[
\phi_{j_1,\dots,j_q}(z)=\prod_{\ell=1}^q \psi_{\ell,j_\ell}(z_\ell).
\]
The additive version is often preferable when $k_Z$ is moderate, because it keeps $d$ small. Tensor products are appropriate only when domain knowledge points to specific interactions that may change in the setting of interest.

\begin{remark}
Not every useful class requires a large basis. In many IV applications, a low-dimensional semiparametric summary is enough. Examples include
\[
\phi(z)=\p{1,z},\qquad \phi(z)=\p{1,z,z^2},\qquad \phi(z)=\p{1,\text{judge indicators},\text{judge-by-time interactions}}.
\]
These classes are especially attractive when policy discussions already single out the relevant directions of change, for example a shift in encouragement probability, a drift in judge shares, or a smooth increase in the mass of distant households.
\end{remark}

\subsection{Richer Classes through Finite-Dimensional Approximation}
\paragraph{RKHS and kernel dictionaries}
A convenient idealization for smooth, local, and nonlinear shifts is an RKHS ball. Let $K\colon \calZ\times\calZ\to\bbR$ be a positive-definite kernel with RKHS $\calH_K$. A natural infinite-dimensional candidate is
\[
\calF^{\infty}_{K,B}=\cb{f\in \calH_K: \norm{f}_{\calH_K}\le B,\ f\ge 0}.
\]
The exact finite-sample result in this paper is developed for finite-dimensional classes, so the practical route is to approximate $\calH_K$ by a finite dictionary,
\[
\phi_r(z)=\p{1,K(z,u_1),\dots,K(z,u_r)},
\]
where $u_1,\dots,u_r\in \calZ$ are landmark points. The induced class $\calF_{\phi_r}$ is then a finite-dimensional proxy for the ideal RKHS family, and the exact theorem applies to this proxy.

The kernel determines the geometry of the protected shifts. Gaussian kernels generate local tilts centered around the landmarks. Polynomial kernels generate global low-order smooth reweightings. If one expects the shift to be smooth after an appropriate transformation of the IV, one may replace $z$ by a representation $g(z)$ and use features of the form $K\p{g(z),g(u_j)}$. Low-rank approximations such as Nystr\"om dictionaries and random features provide practical ways to keep $d$ moderate \citep{Rahimi2007randomfeatures,Musco2017recursivesampling}.

\paragraph{Sparse linear and representation-based classes}
When the IV is high-dimensional, a direct sieve or kernel expansion in the raw coordinates can make $d$ too large relative to the calibration size. A natural compromise is to construct a low-dimensional summary map $\psi\colon \calZ\to\bbR^q$ on the training split and then set
\[
\phi(z)=\p{1,\psi_1(z),\dots,\psi_q(z)}.
\]
The summary may consist of principal components, selected coordinates, geographic summaries, judge-share summaries, or a learned representation of a text, image, or network-valued IV. The point is not that the representation must be rich enough to estimate $h_0$. It only needs to span the directions along which the IV distribution is likely to move in the setting of interest.

\paragraph{Shape-restricted classes}
For scalar IVs, or for low-dimensional summaries of vector IVs, it can be natural to focus on monotone, convex, or bounded-variation shifts. Examples include interventions that only increase encouragement among larger values of $Z$, or geographic changes that monotonically reweight distance. In the exact conformal setting, the cleanest implementation is to approximate a shape-restricted family by a finite grid and a basis of piecewise-constant or piecewise-linear functions. One may then use the induced low-dimensional span as a proxy for the desired cone. A stronger approach would impose linear inequality constraints on the coefficients. Such constrained calibration remains computationally natural, because it leads to linear or convex programs, but it goes beyond the unconstrained span class analyzed here.

\subsection{The Additional Modeling Layer in the \texorpdfstring{$X$}{X}-Indexed Class}
For the $X$-indexed class $\calT_X$, there are two distinct modeling choices. The first is the shift class $\calF$, which still describes how the IV distribution may move in the setting of interest. The second is the radius class for $\tau_X$, which determines how the interval width may vary with $X$. The two should not be conflated.

In practice, one may choose the radius class on $X$ from the same broad families used in nonparametric regression, for example bins in $X$, additive sieves, tensor-product bases, shape-restricted classes, RKHS models, or neural-network classes. The role of that class is completely different from the role of $\calF$. The radius class controls how much heterogeneity the final prediction interval may have across values of $X$. The shift class controls which settings of interest must be protected. In the exact classes $\calT_{XZ}$ and $\calT_Z$, these two roles are partially tied together by the conditional conformal algorithm. In the $X$-indexed method, they are separate.

\section{Theory}
This section collects the main theoretical statements. Proofs appear in Appendix~\ref{appdx:proofs}.

\subsection{Exact Finite-Sample Coverage for \texorpdfstring{$(X, Z)$}{XZ}-Indexed Radius}
We first show that the coverage ratio with an $(X, Z)$-indexed radius works as intended.

\begin{theorem}[Shift-robust coverage over joint tilts]\label{thm:joint_shift_robust}
Assume that the calibration sample and the test point are exchangeable conditional on the training split. Let $\widehat C_{XZ}(x,z)$ be produced by the augmented quantile-regression calibration of Section 4 using feature map $\phi_W$ and the associated nonnegative linear shift class $\calF_W$. Then for any measurable $f\in\calF_W$ with $f\ge 0$ and $\bbE\sqb{f(W)}>0$, the interval satisfies
\[
\bbP^W_f\p{Y_{n+1}\in \widehat C_{XZ}(X_{n+1},Z_{n+1})}\ge 1-\alpha,
\]
where $\bbP^W_f$ is the tilted distribution in \eqref{eq:joint_tilt}.
\end{theorem}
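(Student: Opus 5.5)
The plan is to reproduce the augmented quantile-regression argument of \citet{Gibbs2025conformalprediction} with covariate $W=(X,Z)$ and score $S=\abs{Y-\widehat h(X)}$. The first step is to reduce coverage to a statement about the score. Conditional on the training split, $\widehat h$ is fixed, so
\[
Y_{n+1}\in\widehat C_{XZ}(X_{n+1},Z_{n+1}) \iff S_{n+1}\le \widehat\tau_{XZ}(W_{n+1}).
\]
By the monotonicity of $\eta_{m+1}(s)$ and the definition of $\widehat\tau_{XZ}$ as the largest accepted $s$, this event contains $\{\eta_{m+1}(S_{n+1})<1-\alpha\}$. With the convention $\widehat g:=\widehat g_{S_{n+1}}$, it also contains $\{S_{n+1}\le \widehat g(W_{n+1})\}$. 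The goal is therefore to show
\[
\bbE\sqb{f(W_{n+1})\p{\mathbbm{1}\sqb{S_{n+1}\le \widehat g(W_{n+1})}-(1-\alpha)}}\ge 0
\]
for every nonnegative $f\in\calF_W$. Dividing by $\bbE[f(W)]$ and using \eqref{eq:joint_tilt} then gives the stated tilted-law bound.

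The second step is the first-order optimality condition of the augmented pinball problem when we plug in $s=S_{n+1}$. The objective is now symmetric in the $m+1$ pairs $(W_i,S_i)$, $i\in\mathcal I_{\mathrm{cal}}\cup\{n+1\}$. Because $\calG_W$ is a linear span, we can perturb $\widehat g$ in direction $f\in\calG_W$. The directional subgradient condition then gives
\[
\frac{1}{m+1}\sum_{i}f(W_i)\p{\alpha\,\mathbbm{1}\sqb{S_i>\widehat g(W_i)}-(1-\alpha)\mathbbm{1}\sqb{S_i<\widehat g(W_i)}}\ \text{plus a term supported on interpolated points, equal to }0.
\]
Equivalently, via the LP dual, there are $\eta_i\in[-\alpha,1-\alpha]$, shifted appropriately, with $\sum_i\eta_i f(W_i)=0$ for all $f\in\calG_W$. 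Here $\eta_i$ equals $-\alpha$ when $S_i>\widehat g(W_i)$ and $1-\alpha$ when $S_i<\widehat g(W_i)$. From this we deduce, for nonnegative $f$, that
\[
\sum_i f(W_i)\p{\mathbbm{1}\sqb{S_i\le\widehat g(W_i)}-(1-\alpha)}\ge 0,
\]
since the interpolated points only help the inequality.

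The third step is exchangeability. The fitted $\widehat g$ is a symmetric function of the multiset of the $m+1$ pairs, assuming a symmetric tie-breaking rule for the argmin. Hence the summands $f(W_i)(\mathbbm{1}[S_i\le\widehat g(W_i)]-(1-\alpha))$ are exchangeable. Taking expectations, each summand has the same mean, and that mean equals the test-point term. This yields the desired nonnegative moment, and dividing by $\bbE[f(W)]>0$ completes the argument.

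I expect the main obstacle to be the bookkeeping linking the dual-variable acceptance rule $\eta_{m+1}(s)<1-\alpha$ with the primal event $S_{n+1}\le\widehat g(W_{n+1})$, together with the treatment of ties and non-unique minimizers. Without care, these could break symmetry. I would handle this as Gibbs et al.\ do: choose the dual solution symmetrically (or randomize), and argue the inclusion of events using complementary slackness ($\eta_{m+1}=1-\alpha$ forces $S_{n+1}\le\widehat g(W_{n+1})$, and $\eta_{m+1}<1-\alpha$ implies $S_{n+1}\le \widehat g(W_{n+1})$ as well). Only this one-sided inclusion is needed, since the target is a lower coverage bound.
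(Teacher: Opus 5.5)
Your route is the one the paper relies on. The paper imports from \citet{Gibbs2025conformalprediction} the moment inequality for the coverage event and then divides by $\bbE[f(W)]$; you reprove that inequality by KKT plus exchangeability. As written, however, two steps fail.

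\textbf{Swapped signs.} The first-order condition you display and your dual values ($\eta_i=-\alpha$ when $S_i>\widehat g(W_i)$, $\eta_i=1-\alpha$ when $S_i<\widehat g(W_i)$) belong to the $\alpha$-quantile pinball loss. That is in fact what $\rho_\alpha$ as literally displayed in Section 4 is. From them you only get $\sum_i f(W_i)(\mathbbm{1}[S_i\le\widehat g(W_i)]-\alpha)\ge 0$, which is $\alpha$-level coverage, not the $(1-\alpha)$ inequality you claim. You need the $(1-\alpha)$-quantile loss $(1-\alpha)\max\{u,0\}+\alpha\max\{-u,0\}$. Its LP dual has $\eta_i\in[-\alpha,1-\alpha]$, $\sum_i\eta_i\phi_W(W_i)=0$, $\eta_i=1-\alpha$ if $S_i>\widehat g(W_i)$, and $\eta_i=-\alpha$ if $S_i<\widehat g(W_i)$.

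\textbf{Wrong-direction inclusion.} This is the more substantive problem. You correctly note that the coverage event contains $\{\eta_{m+1}(S_{n+1})<1-\alpha\}$, but you then switch to $\{S_{n+1}\le\widehat g(W_{n+1})\}$. Complementary slackness only gives $\{\eta_{m+1}<1-\alpha\}\subseteq\{S_{n+1}\le\widehat g(W_{n+1})\}$. At an interpolated test point the dual coordinate can equal $1-\alpha$, and then the algorithm rejects $S_{n+1}$ even though $S_{n+1}\le\widehat g(W_{n+1})$. Your closing claim that $\eta_{m+1}=1-\alpha$ forces $S_{n+1}\le\widehat g(W_{n+1})$ is backwards: it is $S_{n+1}>\widehat g(W_{n+1})$ that forces $\eta_{m+1}=1-\alpha$.

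For a concrete failure, take $\phi_W\equiv 1$ and $(m+1)\alpha=1$.
\begin{itemize}
\item The pinball objective is flat between the two largest augmented scores, so the symmetric choice $\widehat g\equiv\max_{1\le i\le m+1}S_i$ is an optimal basic solution.
\item With that choice $\{S_{n+1}\le\widehat g\}$ is a sure event.
\item The algorithm's radius is the largest calibration score, which (for continuous scores) fails to cover with probability $1/(m+1)$.
\end{itemize}
So a moment bound for the primal event says nothing about $\widehat C_{XZ}$.

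\textbf{Repair.} Run your KKT and exchangeability argument on the dual event itself.
\begin{itemize}
\item Since $\eta_i\in[-\alpha,1-\alpha]$, you have $\mathbbm{1}[\eta_i<1-\alpha]-(1-\alpha)\ge-\eta_i$ pointwise.
\item Hence for every $f\in\calF_W$, $\sum_i f(W_i)(\mathbbm{1}[\eta_i<1-\alpha]-(1-\alpha))\ge-\sum_i\eta_i f(W_i)=0$.
\item Compute $\eta$ at $s=S_{n+1}$ and select it permutation-equivariantly. Exchangeability then gives $\bbE[f(W_{n+1})(\mathbbm{1}[\eta_{m+1}(S_{n+1})<1-\alpha]-(1-\alpha))]\ge 0$.
\item This event lies inside the coverage event by the definition of $\widehat\tau_{XZ}$ as the largest accepted value; monotonicity of $\eta_{m+1}$ is not even needed.
\end{itemize}
This is exactly the inequality the paper takes from Gibbs et al., and your division by $\bbE[f(W)]$ then finishes the proof.
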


\subsection{Exact Finite-Sample Coverage for \texorpdfstring{$Z$}{Z}-Indexed Radius}
We next show that the coverage ratio with a $Z$-indexed radius works as intended.

\begin{theorem}[Shift-robust coverage over IV tilts]\label{thm:shift_robust}
Assume that the calibration sample and the test point are exchangeable conditional on the training split. Let $\widehat C_Z(x,z)$ be produced by Algorithm~\ref{alg:ivccp} using feature map $\phi$ and the associated nonnegative linear shift class $\calF$. Then for any measurable $f\in\calF$ with $f\ge 0$ and $\bbE\sqb{f(Z)}>0$, the interval satisfies
\[
\bbP_f\p{Y_{n+1}\in \widehat C_Z(X_{n+1},Z_{n+1})}\ge 1-\alpha,
\]
where $\bbP_f$ is the tilted distribution in \eqref{eq:tilt_family}.
\end{theorem}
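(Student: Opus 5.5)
The plan is to reduce Theorem~\ref{thm:shift_robust} to the moment inequality \eqref{eq:moments_F} and then verify that inequality with the first-order (KKT) argument of \citet{Gibbs2025conformalprediction}, applied to the augmented quantile regression with covariate $Z$.

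\textbf{Step 1: reduction to a moment inequality.} Fix a nonnegative $f\in\calF$ with $\bbE\sqb{f(Z)}>0$. By the definition of the tilt \eqref{eq:tilt_family},
\[
\bbP_f\p{Y_{n+1}\in\widehat C_Z}-(1-\alpha)=\frac{\bbE\sqb{f(Z_{n+1})\p{\mathbbm{1}\sqb{S_{n+1}\le\widehat\tau_Z(Z_{n+1})}-(1-\alpha)}}}{\bbE\sqb{f(Z)}}.
\]
Here $S_{n+1}=\abs{Y_{n+1}-\widehat h(X_{n+1})}$, and $Y_{n+1}\in\widehat C_Z$ is equivalent to $S_{n+1}\le\widehat\tau_Z(Z_{n+1})$. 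So it suffices to show that the numerator is nonnegative, conditional on the training split. Conditional on that split, $\widehat h$ is fixed, and the pairs $(Z_i,S_i)$ for $i\in\mathcal I_{\mathrm{cal}}\cup\{n+1\}$ are exchangeable.

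\textbf{Step 2: plug in the true test score.} Take the augmented problem with $s=S_{n+1}$ and $z=Z_{n+1}$, and let $\widehat g=\widehat g_{S_{n+1}}$ be its minimizer. This problem is symmetric in all $m+1$ points. The dual-acceptance rule defining $\widehat\tau_Z$ gives the event inclusion
\[
\cb{S_{n+1}<\widehat g(Z_{n+1})}\subseteq\cb{S_{n+1}\le\widehat\tau_Z(Z_{n+1})}.
\]
This holds because the dual coordinate satisfies $\eta_{m+1}(s)\in[-\alpha,1-\alpha]$, with $\eta_{m+1}(s)=-\alpha<1-\alpha$ whenever $s<\widehat g_s(z)$. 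Since $\eta_{m+1}$ is monotone, acceptance of $S_{n+1}$ implies $S_{n+1}\le\widehat\tau_Z$. Hence it is enough to show
\[
\bbE\sqb{f(Z_{n+1})\p{\mathbbm{1}\sqb{S_{n+1}<\widehat g(Z_{n+1})}-(1-\alpha)}}\ge 0.
\]

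\textbf{Step 3: first-order optimality (the main step).} Write $f=\beta^\top\phi$. Because $\calG$ is a linear space, the subgradient condition for the augmented pinball objective in the direction $f$ gives
\[
\frac{1}{m+1}\sum_{i=1}^{m+1} f(Z_i)\p{(1-\alpha)\mathbbm{1}\sqb{S_i<\widehat g(Z_i)}-\alpha\mathbbm{1}\sqb{S_i>\widehat g(Z_i)}+\eta_i\mathbbm{1}\sqb{S_i=\widehat g(Z_i)}}=0,
\]
with $\eta_i\in[-\alpha,1-\alpha]$. Rewrite the summand as $f(Z_i)\p{\mathbbm{1}\sqb{S_i<\widehat g(Z_i)}-(1-\alpha)}$ plus terms supported on the interpolation set $\cb{S_i=\widehat g(Z_i)}$ and on $\cb{S_i>\widehat g(Z_i)}$. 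The signs of these correction terms then need to be tracked. Using $f\ge0$ and carrying the sign conventions through carefully, one obtains
\[
\frac{1}{m+1}\sum_{i=1}^{m+1} f(Z_i)\p{\mathbbm{1}\sqb{S_i\le\widehat g(Z_i)}-(1-\alpha)}\ge 0.
\]
The exact inequality direction and the choice between strict and weak indicators are where I expect care to be needed. I would follow Gibbs et al.'s Theorem~1 verbatim, working with the dual variables $\eta$ directly rather than with subgradients. In that formulation, acceptance is exactly $\eta_{m+1}<1-\alpha$, and the dual constraint $\sum_i\eta_i\phi(Z_i)=0$ gives the identity cleanly.

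\textbf{Step 4: exchangeability.} Take expectations of the Step 3 inequality. Exchangeability together with the symmetry of $\widehat g$ in its inputs makes each summand have the same expectation as the test-point summand. This yields the moment inequality, and Step 1 converts it into the stated bound on $\bbP_f$-coverage.
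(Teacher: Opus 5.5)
Your Step 1, together with the final division by $\bbE\sqb{f(Z)}$, is exactly what the paper does. The paper then obtains the moment inequality without re-deriving it: it treats Algorithm~\ref{alg:ivccp} as the case $W=Z$ of Theorem~\ref{thm:joint_shift_robust} and cites \citet{Gibbs2025conformalprediction}. Your Steps 2--3 try to re-derive that cited inequality, and as written they do not close.

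\textbf{Step 2 targets a false inequality.} Use the $(1-\alpha)$-quantile loss $(1-\alpha)\max\cb{u,0}+\alpha\max\cb{-u,0}$, which is the loss behind the dual box $[-\alpha,1-\alpha]$ you use in Step 2. Its optimality conditions give $\theta_i\in[0,1]$, coming from one dual optimum and hence common to all $f=\beta^\top\phi$, such that
\[
\sum_{i=1}^{m+1} f(Z_i)\Bigp{\mathbbm{1}\sqb{S_i<\widehat g(Z_i)}+\theta_i\mathbbm{1}\sqb{S_i=\widehat g(Z_i)}-(1-\alpha)}=0 .
\]
So the strict-indicator sum is $\le 0$ pointwise. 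After exchangeability, the strict-indicator moment that Step 2 asks you to show is $\ge 0$ is in fact nonpositive, and typically strictly negative. For example, take $d=1$ and $(1-\alpha)(m+1)\notin\bbN$. Then $\widehat g$ is the $k$-th augmented order statistic with $k=\lceil(1-\alpha)(m+1)\rceil$, and $\bbP(S_{n+1}<\widehat g)=(k-1)/(m+1)<1-\alpha$.

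\textbf{Step 3 has the weights swapped.} The first-order condition you wrote is the one for $\rho_\alpha$ exactly as printed in the paper, which is the $\alpha$-quantile loss. Off ties, your summand equals $\mathbbm{1}\sqb{S_i<\widehat g(Z_i)}-\alpha$. It therefore differs from the target summand by $1-2\alpha$ on every untied point, not only on ties and on $\cb{S_i>\widehat g(Z_i)}$. No sign bookkeeping recovers level $1-\alpha$ from it.

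\textbf{The weak indicator does not transfer to $\widehat\tau_Z$.} With the orientation fixed, Steps 3--4 do give the weak-indicator moment. That certifies Gibbs et al.'s primal set $\cb{s: s\le\widehat g_s(z)}$, not the dual-defined radius of Algorithm~\ref{alg:ivccp}. Your Step 2 inclusion covers only the strict event, and $\cb{S_{n+1}\le\widehat g(Z_{n+1})}\not\subseteq\cb{S_{n+1}\le\widehat\tau_Z(Z_{n+1})}$ when the primal optimum is not unique. For example, take $d=1$ and $(1-\alpha)(m+1)=k\in\bbN$, with the symmetric selection $\widehat g$ equal to the upper end of the argmin interval, the $(k+1)$-th augmented order statistic. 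A test score of augmented rank $k+1$ then satisfies $S_{n+1}\le\widehat g$ but exceeds $\widehat\tau_Z=S^{\mathrm{cal}}_{(k)}$.

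\textbf{The fix.} The missing idea is the one you mention only in passing: argue with the dual variable itself.
\begin{enumerate}
\item Pick a permutation-equivariant dual optimum $\eta$ at $s=S_{n+1}$, for instance the minimum-norm point of the optimal dual face.
\item Dual feasibility gives $\sum_i f(Z_i)\eta_i=\beta^\top\sum_i\eta_i\phi(Z_i)=0$. Exchangeability then gives $\bbE\sqb{f(Z_{n+1})\eta_{m+1}}=0$.
\item Since $\eta_{m+1}\ge-\alpha$, pointwise $\eta_{m+1}\ge(1-\alpha)-\mathbbm{1}\sqb{\eta_{m+1}<1-\alpha}$. With $f\ge 0$ this yields $\bbE\sqb{f(Z_{n+1})\Bigp{\mathbbm{1}\sqb{\eta_{m+1}(S_{n+1})<1-\alpha}-(1-\alpha)}}\ge 0$.
\item Finally, $\cb{\eta_{m+1}(S_{n+1})<1-\alpha}\subseteq\cb{S_{n+1}\le\widehat\tau_Z(Z_{n+1})}$, because $\widehat\tau_Z$ is the largest accepted value. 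Monotonicity of $\eta_{m+1}(\cdot)$ makes this insensitive to which dual optimum the solver returns.
\end{enumerate}
With this in place of Steps 2--3, your Steps 1 and 4 complete the proof.
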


\begin{remark}
Theorem~\ref{thm:shift_robust} is distribution-free, because it does not require \eqref{eq:npiv} to hold. The class $\calT_X$ remains the main target of interest, but the exact finite-sample guarantee currently available under IV shifts is the IV-indexed class $\calT_Z$.
\end{remark}

\paragraph{Oracle comparison for the IV-indexed class}
We now relate interval length in the exact IV-indexed class to NPIV estimation error and calibration complexity.

Let the oracle structural function be $h_0$ and define oracle scores $S_i^\star=\abs{Y_i-h_0(X_i)}$. Let $\widehat\tau_Z$ be computed from $\cb{S_i}$ and let $\widehat\tau_Z^\star$ be computed from $\cb{S_i^\star}$ using the same IV-indexed calibration algorithm and features.

\begin{proposition}[Stability to NPIV estimation error]\label{prop:stability}
For any realization of the data and any $z\in\calZ$, we have
\[
\abs{\widehat\tau_Z(z)-\widehat\tau_Z^\star(z)} \le \norm{\widehat h-h_0}_\infty.
\]
Consequently, we have
\[
\abs{\mathrm{len}(\widehat C_Z(x,z)) - \mathrm{len}(\widehat C_Z^\star(x,z))}\le 2\norm{\widehat h-h_0}_\infty,
\]
where $\widehat C_Z^\star(x,z)=\sqb{\widehat h(x)-\widehat\tau_Z^\star(z),\widehat h(x)+\widehat\tau_Z^\star(z)}$.
\end{proposition}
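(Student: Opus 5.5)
The plan is to treat the IV-indexed calibration map as a functional $\Phi$ of the score vector, with the calibration IVs $Z_1,\dots,Z_m$, the test IV $z$, the features $\phi$ and the level $\alpha$ held fixed. We then check two properties of $\Phi$: \emph{monotonicity} (raising every score weakly raises the output radius) and \emph{translation equivariance} (adding a constant $c$ to every score adds $c$ to the output radius). Together these give the Lipschitz bound directly. Set $\delta=\norm{\widehat h-h_0}_\infty$. By the reverse triangle inequality, $\abs{S_i-S_i^\star}=\big|\abs{Y_i-\widehat h(X_i)}-\abs{Y_i-h_0(X_i)}\big|\le\delta$, so $S_i^\star-\delta\le S_i\le S_i^\star+\delta$ for all $i$. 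Monotonicity and equivariance then give
\[
\Phi(S^\star)-\delta=\Phi(S^\star-\delta)\le \Phi(S)\le\Phi(S^\star+\delta)=\Phi(S^\star)+\delta,
\]
which is the first claim. The length claim follows at once, since $\mathrm{len}(\widehat C_Z(x,z))=2\widehat\tau_Z(z)$ and $\mathrm{len}(\widehat C_Z^\star(x,z))=2\widehat\tau_Z^\star(z)$ share the same center.

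\textbf{Translation equivariance.} Because $\calG$ contains the constants (the first coordinate of $\phi$ is $1$), the map $g\mapsto g+c$ is a bijection of $\calG$. The augmented pinball objective with scores $S_i+c$ and candidate $s+c$ at $g+c$ equals the original objective with scores $S_i$ and candidate $s$ at $g$. The primal minimizers therefore shift by $c$, while the LP dual is unchanged: its feasible set does not depend on the scores, and its objective changes by $c\sum_i\eta_i$, which is constant because the intercept constraint forces $\sum_i\eta_i=0$. Consequently $\eta_{m+1}(s+c;S+c)=\eta_{m+1}(s;S)$, the accepted set shifts by $c$, and so does its supremum $\widehat\tau_Z(z)$.

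\textbf{Monotonicity.} I expect this step to be the main obstacle. I would use the characterization from \citet{Gibbs2025conformalprediction}: $s$ is accepted iff $s\le \widehat g_s(z)$, up to the interpolation/tie convention encoded by $\eta_{m+1}(s)<1-\alpha$. The radius is then the largest $s$ with $s\le\widehat g_s(z)$. I would show that the fitted quantile at the test point, $\widehat g_s(z)$, is nondecreasing in each calibration score $S_i$ when the other inputs are fixed. This is the standard monotonicity of the dual coordinate: increasing $S_i$ changes only a linear objective coefficient in the dual LP. By the LP sensitivity argument already used to show $\eta_{m+1}(s)$ is nondecreasing in $s$, $\eta_{m+1}$ is then nonincreasing in $S_i$. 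Hence the accepted set $\{s:\eta_{m+1}(s)<1-\alpha\}$ can only grow, and $\widehat\tau_Z$ weakly increases. The concrete route is to write the dual as $\max_\eta \sum_i\eta_iS_i+\eta_{m+1}s$ subject to $\eta\in[-\alpha,1-\alpha]^{m+1}$ and $\sum_i\eta_i\phi(Z_i)+\eta_{m+1}\phi(z)=0$. A cyclical-monotonicity/exchange argument comparing optimal duals at two score vectors then gives $(\eta_i'-\eta_i)(S_i'-S_i)\ge0$. Combined with the constraint, this transfers to the required sign on $\eta_{m+1}$. Non-uniqueness of the dual optimum needs care: I would fix a consistent selection rule, for instance the one used to define $\widehat\tau_Z$, or argue with the primal characterization instead.
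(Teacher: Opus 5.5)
\textbf{The monotonicity step cannot be closed.} Your plan has the same structure as the paper's proof: reverse triangle inequality, translation equivariance, then a monotone sandwich. Your dual argument for equivariance is fine. The problem is the step you flagged. Coordinatewise monotonicity of $\widehat\tau_Z(z)$ in the calibration scores is false once $\phi$ has a non-constant coordinate, and so is the proposition. The paper's proof asserts this monotonicity without argument, so the gap is shared.

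To see why, put weight $1-\alpha$ on positive residuals (the convention matching the acceptance rule $\eta_{m+1}(s)<1-\alpha$). Let $g^+$ minimize $\sum_{i\le m}\rho_\alpha(S_i-g(Z_i))-(1-\alpha)g(z)$ over $\calG$, and suppose it is unique.
\begin{itemize}
\item For $s>g^+(z)$, the inequality $\rho_\alpha(u)\ge(1-\alpha)u$ shows that $g^+$ is also the augmented fit, with positive test residual. Complementary slackness then forces $\eta_{m+1}(s)=1-\alpha$.
\item For $s<g^+(z)$, weak duality shows that no optimal dual has $\eta_{m+1}=1-\alpha$.
\end{itemize}
Hence $\widehat\tau_Z(z)=g^+(z)$. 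Locally this equals $c^\top S_B$ with $c=\Phi_B^{-\top}\phi(z)$, where $B$ is an optimal basis and $\Phi_B$ has rows $\phi(Z_i)^\top$, $i\in B$. The intercept gives $\sum_j c_j=1$, but the weights can be negative. This is exactly what happens for quantile-regression fitted values at a point outside the hull of the basis points.

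\textbf{Counterexample.} Take $\alpha=1/2$, so both sign conventions give $\rho_\alpha(u)=\abs{u}/2$. Let $\phi(z')=(1,z')$, the calibration IVs be $0,1,5/2$, the test IV be $z=2$, and the scores be $(10+u,11+v,10+w)$ with $u,v,w$ small.
\begin{itemize}
\item The line through the first two points lies above the third point.
\item At that line, twice the directional derivative of the objective defining $g^+$ is $\abs{da}+\abs{da+db}+db/2$, which is strictly positive for $(da,db)\neq 0$.
\item So this line is the unique $g^+$, and $\widehat\tau_Z(2)=12-u+2v$ (weights $c=(-1,2)$).
\end{itemize}
Raising $S_1$ therefore lowers the radius. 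Now take residuals $\varepsilon=(10,-11,10)$ and $\widehat h=h_0+\delta$. Then $S=(10-\delta,11+\delta,10-\delta)$ and $\widehat\tau_Z(2)-\widehat\tau_Z^\star(2)=3\delta>\norm{\widehat h-h_0}_\infty$.

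\textbf{Where your LP sketch breaks.} Monotonicity of the subgradient of the convex value function gives only the own-coordinate inequality $(\eta_j'-\eta_j)(S_j'-S_j)\ge 0$. The response of $\eta_{m+1}$ to $S_j$ is an off-diagonal effect with no sign. The constraint $\sum_i\eta_i\phi(Z_i)=0$ only forces the other coordinates to compensate jointly, not $\eta_{m+1}$ in particular. Your claim that $\widehat g_s(z)$ is nondecreasing in each $S_i$ fails for the same reason.

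\textbf{What survives.} The sandwich argument is valid, with constant $1$, when all such weights are nonnegative. Examples are $\phi\equiv 1$, and indicators of a partition, where $\widehat\tau_Z(z)$ is an order statistic of the scores in the cell of $z$. In general you can only expect a constant of order $\max_B\norm{\Phi_B^{-\top}\phi(z)}_1\ge 1$ in front of $\norm{\widehat h-h_0}_\infty$.
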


Next, compare this with the oracle IV-indexed radius
\[
\tau_0(z)=\inf\cb{t: \bbP\p{\abs{\varepsilon}\le t\mid Z=z}\ge 1-\alpha}.
\]
We require two regularity conditions.

\begin{assumption}[Local density lower bound]\label{ass:density}
For all $z\in\calZ$, the conditional distribution of $\abs{\varepsilon}$ given $Z=z$ has a Lebesgue density $p_{\abs{\varepsilon}\mid z}(\cdot)$ that is bounded below by $p_{\min}>0$ on a neighborhood of $\tau_0(z)$.
\end{assumption}

\begin{assumption}[Oracle quantile-transfer regularity]\label{ass:transfer}
Assume that the oracle radius function $z'\mapsto \tau_0(z')$ belongs to $\calG=\cb{z'\mapsto \beta^\top \phi(z'): \beta\in\bbR^d}$. For each fixed $z\in\calZ$, let $\widehat g_z\in\calG$ be an optimal basic solution of the augmented oracle quantile-regression problem defined in the proof of Theorem~\ref{thm:length}, chosen so that $\widehat g_z(z)=\widehat\tau_Z^\star(z)$. Then
\[
\abs{F_z\p{\widehat\tau_Z^\star(z)}-\frac{1}{m+1}\sum_{i=1}^{m+1}\mathbbm{1}\sqb{S_i^\star\le \widehat g_z(Z_i)}}=o_p(1),
\]
where $F_z(t)=\bbP\p{\abs{\varepsilon}\le t\mid Z=z}$.
\end{assumption}

\begin{theorem}[Length inflation bound]\label{thm:length}
Assume \eqref{eq:npiv} and Assumptions~\ref{ass:density} and \ref{ass:transfer}. Then, for each fixed $z\in\calZ$,
\[
\abs{\widehat\tau_Z^\star(z)-\tau_0(z)}\le \frac{d}{(m+1)p_{\min}}+o_p(1).
\]
Combining this with Proposition~\ref{prop:stability}, we obtain
\[
\abs{\widehat\tau_Z(z)-\tau_0(z)}\le \norm{\widehat h-h_0}_\infty + \frac{d}{(m+1)p_{\min}}+o_p(1).
\]
In particular,
\[
\mathrm{len}(\widehat C_Z(x,z))-\mathrm{len}(C_0(x,z))
\le 2\norm{\widehat h-h_0}_\infty + \frac{2d}{(m+1)p_{\min}}+o_p(1),
\]
where $C_0(x,z)=\sqb{h_0(x)-\tau_0(z), h_0(x)+\tau_0(z)}$.
\end{theorem}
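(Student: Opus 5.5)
The plan is to work at the fixed test point $z$ with the oracle augmented quantile regression. Write $S^\star_{m+1}=s$ for the augmented score and $Z_{m+1}=z$. Let $\widehat g_z\in\calG$ be the optimal basic solution from Assumption~\ref{ass:transfer}, normalized so that $\widehat g_z(z)=\widehat\tau_Z^\star(z)$.

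\textbf{Step 1: counting bound from linear programming.} Since $\calG$ is a $d$-dimensional linear span, the quantile-regression LP has basic optimal solutions that interpolate at most $d$ of the $m+1$ augmented points. The first component of $\phi$ is the constant $1$, so the subgradient optimality condition in the intercept direction is available. Together these give the standard counting bound
\[
\Bigl|\frac{1}{m+1}\sum_{i=1}^{m+1}\mathbbm{1}\sqb{S_i^\star\le \widehat g_z(Z_i)}-(1-\alpha)\Bigr|\le \frac{d}{m+1}.
\]
More carefully: the empirical fraction of strict exceedances is at most $\alpha$, and the fraction of non-strict exceedances is at least $\alpha$. The two differ by the number of interpolated points, which is at most $d$. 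This is the same dual bookkeeping as in \citet{Gibbs2025conformalprediction}. Taking $s=\widehat\tau_Z^\star(z)$, the boundary of the accepted set, places the augmented point itself among the interpolated points, so the bound still holds.

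\textbf{Step 2: transfer to the conditional c.d.f.} Combine Step 1 with Assumption~\ref{ass:transfer} to get
\[
\abs{F_z(\widehat\tau_Z^\star(z))-(1-\alpha)}\le \frac{d}{m+1}+o_p(1).
\]
Under \eqref{eq:npiv}, $S_i^\star=\abs{\varepsilon_i}$, so $F_z$ is exactly the conditional c.d.f. of $\abs{\varepsilon}$ given $Z=z$. Assumption~\ref{ass:density} makes $F_z$ continuous near $\tau_0(z)$, hence $F_z(\tau_0(z))=1-\alpha$.

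\textbf{Step 3: inversion using the density lower bound.} Suppose $\widehat\tau_Z^\star(z)$ lies in the neighborhood $U$ of $\tau_0(z)$ from Assumption~\ref{ass:density}. Then the mean value inequality gives
\[
p_{\min}\abs{\widehat\tau_Z^\star(z)-\tau_0(z)}\le \abs{F_z(\widehat\tau_Z^\star(z))-F_z(\tau_0(z))}.
\]
Dividing by $p_{\min}$ yields $d/((m+1)p_{\min})+o_p(1)$; the factor $1/p_{\min}$ is a constant and leaves $o_p(1)$ unchanged. If $\widehat\tau_Z^\star(z)$ lies outside $U$, monotonicity of $F_z$ shows that $\abs{F_z(\widehat\tau_Z^\star(z))-(1-\alpha)}$ is at least $p_{\min}$ times the radius of $U$, which is a fixed positive constant. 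When $m$ is large this contradicts Step 2 except on an event of vanishing probability. That event is absorbed into the $o_p(1)$ term, or, for small $m$, into the right-hand side after adjusting constants.

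\textbf{Step 4: combination.} The triangle inequality with Proposition~\ref{prop:stability} gives the bound on $\abs{\widehat\tau_Z(z)-\tau_0(z)}$. For the length statement, both intervals have length equal to twice their radius, so the length difference is at most $2\abs{\widehat\tau_Z(z)-\tau_0(z)}$.

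The main obstacle is Step 1 at the augmented point. The counting argument needs $\widehat g_z$ to be a genuine basic solution of the LP with $s=\widehat\tau_Z^\star(z)$. This requires $\widehat\tau_Z^\star(z)$, the supremum of the accepted $s$, to be attained or approached in a way that preserves optimality. I would handle this by the monotonicity of $\eta_{m+1}(s)$ and by taking limits of basic solutions, using that the LP has finitely many bases. The localization argument in Step 3 also needs some care, but it is routine.
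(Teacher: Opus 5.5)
Your proposal follows the paper's proof essentially step for step: the LP/KKT counting bound for the augmented oracle fit $\widehat g_z$, then Assumption~\ref{ass:transfer} to pass to $F_z$, then inversion through the density lower bound of Assumption~\ref{ass:density}, and finally the triangle inequality with Proposition~\ref{prop:stability}; your Step~3 localization is more careful than the paper's one-line appeal to the density bound. One caveat, which the paper shares: your remark in Step~1 that the bound ``still holds'' at $s=\widehat\tau_Z^\star(z)$ is off by one, because at that boundary value the augmented point is interpolated in addition to (generically) $d$ calibration points (for $d=1$, $\widehat\tau_Z^\star(z)=S^\star_{(k)}$ with $k=\lceil(1-\alpha)(m+1)\rceil$, and the constant fit passes through both $S^\star_{(k)}$ and the augmented point), so the correct count is $(d+1)/(m+1)$; the extra $1/\p{(m+1)p_{\min}}$ is $o(1)$ and is absorbed into the $o_p(1)$ term, so the stated bounds still hold.
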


\begin{remark}
If $\tau_0$ does not lie in $\calG$, the same argument yields an additional approximation term $\inf_{g\in\calG}\abs{g(z)-\tau_0(z)}$.
\end{remark}

\subsection{Exact Finite-Sample Coverage for \texorpdfstring{$X$}{X}-Indexed Radius}
The next proposition records the population criterion underlying the $X$-indexed importance-weighted method.

\begin{proposition}[Integrated weighted-moment characterization]\label{prop:iw_characterization}
Suppose that the conditional density ratio \eqref{eq:ratio_sx_z} exists. Let $\nu$ be a measure on $\calZ$ with full support and define
\[
Q(\tau_X)=\int \Bigp{\bbE\sqb{m_{\tau_X}(S,X)r_0(S,X\mid z)}}^2\,\nu(dz).
\]
Then $Q(\tau_X)=0$ if and only if \eqref{eq:xindexed_conditional_moment} holds for $\nu$-almost every $z$. In particular, $Q(\tau_X)=0$ implies exact conditional coverage for $\widehat C_X$ whenever $\nu$ has full support and the conditional moment is continuous in $z$.
\end{proposition}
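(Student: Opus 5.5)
The plan is to reduce everything to the importance-weighted identity of Proposition~\ref{prop:iw_identity} and then use the elementary fact that a nonnegative integrand with zero integral vanishes almost everywhere. Write
\[
g(z)=\bbE\sqb{m_{\tau_X}(S,X)r_0(S,X\mid z)}.
\]
Since $m_{\tau_X}$ takes values in $\{\alpha,-(1-\alpha)\}$, it is bounded. Because $r_0(\cdot\mid z)$ is a density ratio, $\bbE\sqb{r_0(S,X\mid z)}=1$. Hence $g$ is well defined and $\abs{g(z)}\le 1$. By Proposition~\ref{prop:iw_identity}, $g(z)=\bbE\sqb{m_{\tau_X}(S,X)\mid Z=z}$ for every $z$, and therefore $Q(\tau_X)=\int g(z)^2\,\nu(dz)$.

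For the main equivalence, first assume $Q(\tau_X)=0$. The integrand $g^2$ is nonnegative and measurable, so $g^2=0$ $\nu$-a.e. That is, $\bbE\sqb{m_{\tau_X}(S,X)\mid Z=z}=0$ for $\nu$-a.e.\ $z$, which is \eqref{eq:xindexed_conditional_moment} off a $\nu$-null set. Conversely, if \eqref{eq:xindexed_conditional_moment} holds for $\nu$-a.e.\ $z$, then $g^2=0$ $\nu$-a.e. Since $g^2$ is bounded, the integral is finite and equals zero. The one technical point is measurability of $z\mapsto g(z)$. I would handle it by taking $r_0(s,x\mid z)$ jointly measurable, as a ratio of jointly measurable densities, and applying Fubini/Tonelli.

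For the ``in particular'' clause, assume $z\mapsto g(z)$ is continuous and $\nu$ has full support. The set $U=\{z: g(z)\neq 0\}$ is open by continuity. From the first part it is $\nu$-null, and a nonempty open set cannot be $\nu$-null under full support, so $U=\emptyset$. Hence $g\equiv 0$ on $\calZ$ (or on $\mathrm{supp}(\nu)=\calZ$). This is exactly \eqref{eq:xindexed_conditional_moment} for all $z$, and it is equivalent to exact conditional coverage \eqref{eq:exact_conditional} for $\widehat C_X$, because $\mathbbm{1}\sqb{Y\in\widehat C_X(X)}=\mathbbm{1}\sqb{S\le\tau_X(X)}$.

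There is no deep obstacle. The only care points are two. The first is the boundedness and integrability of $g$, which needs $\bbE\sqb{r_0}=1$ and $m$ bounded. The second is the precise meaning of ``the conditional moment is continuous in $z$''. Conditional expectations are defined only up to null sets, so I interpret that hypothesis as continuity of the version $g$ supplied by Proposition~\ref{prop:iw_identity}.
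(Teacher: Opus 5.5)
Your proposal is correct and takes essentially the same route as the paper. Proposition~\ref{prop:iw_identity} identifies the integrand with the squared conditional moment, a nonnegative integrand with zero integral vanishes $\nu$-a.e.\ (and conversely), and continuity together with full support upgrades this to every $z$; your added remarks on the boundedness of $g$, measurability in $z$, and reading ``continuous'' as continuity of the specific version supplied by Proposition~\ref{prop:iw_identity} are sound refinements the paper leaves implicit.
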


\begin{remark}
Proposition~\ref{prop:iw_characterization} explains why \eqref{eq:iw_objective_sample} is a natural empirical analogue. The criterion is not distribution-free finite-sample, because it depends on the estimated ratio $\widehat r$ and the surrogate $\psi_\kappa$, but it targets the correct population condition.
\end{remark}

Then, the following theorem provides a complementary finite-sample statement for the $X$-indexed class. It applies to one fixed target shift rather than to a whole class, and it requires an independent final recalibration split together with a known upper bound on the target weight. This is the exact finite-sample guarantee currently available for prediction intervals $C(X)$ under a fixed target shift.

\begin{theorem}[Single-shift weighted conformal recalibration for $C(X)$]\label{thm:weighted_conformal_tx}
Fix a nonnegative $f_0\in\calF$ with $\bbE\sqb{f_0(Z)}>0$. Suppose $\widehat h$ and $q$ are learned on samples independent of the final recalibration split $\mathcal I_{\mathrm{rcal}}$, the weights \eqref{eq:single_shift_weights} are known, and there exists a known finite constant $B_{f_0}$ satisfying $\sup_z w_{f_0}(z)\le B_{f_0}$. Let $\widehat t_{f_0}$ be defined as above from the normalized scores \eqref{eq:normalized_scores_tx}. Then the interval \eqref{eq:xindexed_scaled_interval} satisfies
\[
\bbP_{f_0}\p{Y_{n+1}\in \widehat C_{X,f_0}(X_{n+1})}\ge 1-\alpha.
\]
\end{theorem}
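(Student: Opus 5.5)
Conditional on the training data (which fixes $\widehat h$ and $q$), the normalized score $R=\abs{Y-\widehat h(X)}/q(X)$ is a fixed measurable function of $(Y,X)$. The proof then reduces to the weighted conformal argument of \citet{Tibshirani2019conformalprediction} for covariate shift, applied with the "covariate" taken to be $Z$. The target law $\bbP_{f_0}$ in \eqref{eq:tilt_family} differs from $\bbP$ only through the factor $w_{f_0}(z)$ on the marginal of $Z$. Hence the triples $(Y_i,X_i,Z_i)$ for $i\in\mathcal I_{\mathrm{rcal}}$, drawn i.i.d.\ from $\bbP$, together with a test point $(Y_{n+1},X_{n+1},Z_{n+1})\sim\bbP_{f_0}$, are weighted exchangeable in the sense of that paper, with weight functions $w_{f_0}(z)$ on the test coordinate and $1$ on the calibration coordinates. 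The weight depends on $z$ alone, but Tibshirani et al.'s lemma only needs the likelihood ratio as a function of the full data point, so a weight that ignores $(y,x)$ is allowed.

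First, I would invoke the weighted-exchangeability lemma (equivalently, condition on the unordered multiset of the $m+1$ points). Given the multiset, the test point equals the $i$-th point with probability
\[
p_i=\frac{w_{f_0}(Z_i)}{\sum_{j\in\mathcal I_{\mathrm{rcal}}}w_{f_0}(Z_j)+w_{f_0}(Z_{n+1})},
\]
with $i$ ranging over $\mathcal I_{\mathrm{rcal}}\cup\{n+1\}$. This yields the oracle statement
\[
\bbP_{f_0}\Bigp{R_{n+1}\le \mathrm{Quantile}_{1-\alpha}\Bigp{\textstyle\sum_{i\in\mathcal I_{\mathrm{rcal}}} p_i\delta_{R_i}+p_{n+1}\delta_{+\infty}}}\ge 1-\alpha,
\]
which is the test-point-specific cutoff $t^\ast(Z_{n+1})$.

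Second, I would show the deterministic domination $\widehat t_{f_0}\ge t^\ast(Z_{n+1})$. Since $w_{f_0}(Z_{n+1})\le B_{f_0}$, each calibration mass satisfies
\[
\frac{w_{f_0}(Z_i)}{\sum_j w_{f_0}(Z_j)+B_{f_0}}\le p_i .
\]
Thus for every $t$ the cumulative weight $\sum_i \frac{w_{f_0}(Z_i)}{\sum_j w_{f_0}(Z_j)+B_{f_0}}\mathbbm{1}[R_i\le t]$ is at most the cumulative weight of the oracle distribution at $t$, because the point mass at $+\infty$ never contributes at finite $t$. Any $t$ accepted in the definition of $\widehat t_{f_0}$ is therefore accepted for the oracle quantile, so the infimum for $\widehat t_{f_0}$ is at least $t^\ast$. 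When no finite $t$ is accepted, $\widehat t_{f_0}=\infty$ and coverage is trivial.

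Finally, the events satisfy $\{R_{n+1}\le t^\ast\}\subseteq\{R_{n+1}\le \widehat t_{f_0}\}$, and $R_{n+1}\le \widehat t_{f_0}$ is equivalent to $Y_{n+1}\in\widehat C_{X,f_0}(X_{n+1})$ because $q>0$. Integrating over the training split gives the claim. I expect the main obstacle to be stating the weighted exchangeability step cleanly: I would write the joint density of the $m+1$ points under the mixed law, compare it across permutations, and note that it is proportional to $w_{f_0}(z_{n+1})$. The rest is monotonicity of quantiles.
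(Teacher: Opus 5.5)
Your proposal is correct and follows the paper's proof. You condition on the training split, apply the weighted split conformal result of \citet{Tibshirani2019conformalprediction} with $Z$ as the covariate and $w_{f_0}$ as the test weight to get coverage at the test-point-specific cutoff, and then use $w_{f_0}(Z_{n+1})\le B_{f_0}$ to dominate that cutoff by $\widehat t_{f_0}$. Your mass-by-mass comparison and explicit weighted-exchangeability derivation just spell out the monotonicity-in-the-test-weight step that the paper states in one line.
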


\section{Simulation}

\begin{table}[!th]
\caption{Results of the experiment using Dataset~1.}
\label{tab:dataset1}
    \centering
\resizebox{0.7\linewidth}{!}{
\begin{tabular}{llllrrrr}
\toprule
\multirow[c]{2}{*}{Radius} & \multirow[c]{2}{*}{IV Shift} & \multirow[c]{2}{*}{Radius Model} & \multirow[c]{2}{*}{Test Dist. of IVs} & \multicolumn{2}{c}{Cov. Ratio} & \multicolumn{2}{c}{Interval Length} \\
 &  & & &  Mean & SD & Mean & SD \\
\midrule
\multirow[t]{12}{*}{XZ-indexed} & \multirow[t]{4}{*}{Bins} & \multirow[t]{4}{*}{} & Observed & 0.902 & 0.020 & 4.456 & 9.446 \\
 &  &  & Linear Tilt & 0.901 & 0.024 & 4.288 & 7.181 \\
 &  &  & Local Tilt & 0.900 & 0.027 & 4.335 & 6.623 \\
 &  &  & Step Tilt & 0.901 & 0.022 & 4.329 & 7.742 \\
\cline{2-8} \cline{3-8}
 & \multirow[t]{4}{*}{Linear} & \multirow[t]{4}{*}{} & Observed & 0.908 & 0.021 & 4.080 & 7.179 \\
 &  &  & Linear Tilt & 0.907 & 0.028 & 4.082 & 6.576 \\
 &  &  & Local Tilt & 0.906 & 0.032 & 4.076 & 6.175 \\
 &  &  & Step Tilt & 0.907 & 0.026 & 4.065 & 6.564 \\
\cline{2-8} \cline{3-8}
 & \multirow[t]{4}{*}{RKHS} & \multirow[t]{4}{*}{} & Observed & 0.914 & 0.020 & 4.720 & 9.806 \\
 &  &  & Linear Tilt & 0.913 & 0.023 & 4.588 & 8.097 \\
 &  &  & Local Tilt & 0.922 & 0.026 & 4.685 & 7.389 \\
 &  &  & Step Tilt & 0.913 & 0.022 & 4.584 & 8.059 \\
\cline{1-8} \cline{2-8} \cline{3-8}
\multirow[t]{12}{*}{Z-indexed} & \multirow[t]{4}{*}{Bins} & \multirow[t]{4}{*}{} & Observed & 0.904 & 0.021 & 4.227 & 8.520 \\
 &  &  & Linear Tilt & 0.904 & 0.025 & 4.236 & 8.186 \\
 &  &  & Local Tilt & 0.902 & 0.029 & 4.201 & 7.677 \\
 &  &  & Step Tilt & 0.904 & 0.024 & 4.202 & 7.971 \\
\cline{2-8} \cline{3-8}
 & \multirow[t]{4}{*}{Linear} & \multirow[t]{4}{*}{} & Observed & 0.906 & 0.020 & 4.046 & 7.553 \\
 &  &  & Linear Tilt & 0.907 & 0.026 & 4.114 & 7.563 \\
 &  &  & Local Tilt & 0.905 & 0.029 & 4.147 & 7.571 \\
 &  &  & Step Tilt & 0.906 & 0.024 & 4.094 & 7.557 \\
\cline{2-8} \cline{3-8}
 & \multirow[t]{4}{*}{RKHS} & \multirow[t]{4}{*}{} & Observed & 0.913 & 0.019 & 4.446 & 9.054 \\
 &  &  & Linear Tilt & 0.913 & 0.024 & 4.422 & 8.218 \\
 &  &  & Local Tilt & 0.912 & 0.029 & 4.455 & 7.764 \\
 &  &  & Step Tilt & 0.913 & 0.022 & 4.402 & 8.189 \\
\cline{1-8} \cline{2-8} \cline{3-8}
\multirow[t]{16}{*}{X-indexed} & \multirow[t]{16}{*}{} & \multirow[t]{4}{*}{Linear} & Observed & 0.906 & 0.028 & 4.904 & 15.400 \\
 &  &  & Linear Tilt & 0.917 & 0.032 & 5.415 & 17.546 \\
 &  &  & Local Tilt & 0.919 & 0.036 & 5.808 & 18.995 \\
 &  &  & Step Tilt & 0.913 & 0.031 & 5.313 & 17.559 \\
\cline{3-8}
 &  & \multirow[t]{4}{*}{Bins} & Observed & 0.907 & 0.027 & 20.055 & 164.856 \\
 &  &  & Linear Tilt & 0.918 & 0.031 & 25.255 & 213.098 \\
 &  &  & Local Tilt & 0.921 & 0.036 & 27.552 & 233.685 \\
 &  &  & Step Tilt & 0.915 & 0.030 & 25.166 & 213.356 \\
\cline{3-8}
 &  & \multirow[t]{4}{*}{RKHS} & Observed & 0.907 & 0.029 & 10.505 & 71.069 \\
 &  &  & Linear Tilt & 0.917 & 0.031 & 10.207 & 64.795 \\
 &  &  & Local Tilt & 0.919 & 0.036 & 7.377 & 33.785 \\
 &  &  & Step Tilt & 0.913 & 0.031 & 10.043 & 64.432 \\
\cline{3-8}
 &  & \multirow[t]{4}{*}{MLP} & Observed & 0.906 & 0.029 & inf & - \\
 &  &  & Linear Tilt & 0.918 & 0.030 & inf & - \\
 &  &  & Local Tilt & 0.919 & 0.036 & inf & - \\
 &  &  & Step Tilt & 0.914 & 0.029 & inf & - \\
\cline{1-8} \cline{2-8} \cline{3-8}
\bottomrule
\end{tabular}

}
\end{table}

\begin{table}[!th]
\caption{Results of the experiment using Dataset~2.}
    \label{tab:dataset2}
\centering
\resizebox{0.7\linewidth}{!}{
\begin{tabular}{llllrrrr}
\toprule
\multirow[c]{2}{*}{Radius} & \multirow[c]{2}{*}{IV Shift} & \multirow[c]{2}{*}{Radius Model} & \multirow[c]{2}{*}{Test Dist. of IVs} & \multicolumn{2}{c}{Cov. Ratio} & \multicolumn{2}{c}{Interval Length} \\
 &  & & &  Mean & SD & Mean & SD \\
\midrule
\multirow[t]{12}{*}{XZ-indexed} & \multirow[t]{4}{*}{Bins} & \multirow[t]{4}{*}{} & Observed & 0.901 & 0.020 & 2.751 & 0.210 \\
 &  &  & Linear Tilt & 0.898 & 0.024 & 2.800 & 0.246 \\
 &  &  & Local Tilt & 0.889 & 0.028 & 2.822 & 0.274 \\
 &  &  & Step Tilt & 0.900 & 0.021 & 2.786 & 0.235 \\
\cline{2-8} \cline{3-8}
 & \multirow[t]{4}{*}{Linear} & \multirow[t]{4}{*}{} & Observed & 0.914 & 0.023 & 2.858 & 0.239 \\
 &  &  & Linear Tilt & 0.915 & 0.025 & 2.951 & 0.282 \\
 &  &  & Local Tilt & 0.916 & 0.027 & 3.070 & 0.302 \\
 &  &  & Step Tilt & 0.917 & 0.023 & 2.944 & 0.269 \\
\cline{2-8} \cline{3-8}
 & \multirow[t]{4}{*}{RKHS} & \multirow[t]{4}{*}{} & Observed & 0.912 & 0.020 & inf & -  \\
 &  &  & Linear Tilt & 0.910 & 0.022 & inf & -  \\
 &  &  & Local Tilt & 0.906 & 0.025 & inf & -  \\
 &  &  & Step Tilt & 0.910 & 0.021 & inf & -  \\
\cline{1-8} \cline{2-8} \cline{3-8}
\multirow[t]{12}{*}{Z-indexed} & \multirow[t]{4}{*}{Bins} & \multirow[t]{4}{*}{} & Observed & 0.903 & 0.019 & 2.761 & 0.212 \\
 &  &  & Linear Tilt & 0.905 & 0.023 & 2.856 & 0.251 \\
 &  &  & Local Tilt & 0.904 & 0.026 & 2.956 & 0.281 \\
 &  &  & Step Tilt & 0.905 & 0.021 & 2.834 & 0.243 \\
\cline{2-8} \cline{3-8}
 & \multirow[t]{4}{*}{Linear} & \multirow[t]{4}{*}{} & Observed & 0.904 & 0.021 & 2.746 & 0.218 \\
 &  &  & Linear Tilt & 0.906 & 0.025 & 2.842 & 0.257 \\
 &  &  & Local Tilt & 0.898 & 0.028 & 2.880 & 0.279 \\
 &  &  & Step Tilt & 0.905 & 0.022 & 2.813 & 0.241 \\
\cline{2-8} \cline{3-8}
 & \multirow[t]{4}{*}{RKHS} & \multirow[t]{4}{*}{} & Observed & 0.914 & 0.020 & 2.906 & 0.230 \\
 &  &  & Linear Tilt & 0.916 & 0.023 & 3.014 & 0.300 \\
 &  &  & Local Tilt & 0.917 & 0.026 & 3.139 & 0.331 \\
 &  &  & Step Tilt & 0.917 & 0.022 & 2.988 & 0.276 \\
\cline{1-8} \cline{2-8} \cline{3-8}
\multirow[t]{16}{*}{X-indexed} & \multirow[t]{16}{*}{} & \multirow[t]{4}{*}{Linear} & Observed & 0.901 & 0.028 & 2.827 & 0.357 \\
 &  &  & Linear Tilt & 0.920 & 0.029 & 3.156 & 0.488 \\
 &  &  & Local Tilt & 0.920 & 0.036 & 3.298 & 0.541 \\
 &  &  & Step Tilt & 0.913 & 0.028 & 3.027 & 0.435 \\
\cline{3-8}
 &  & \multirow[t]{4}{*}{Bins} & Observed & 0.902 & 0.027 & 2.920 & 0.348 \\
 &  &  & Linear Tilt & 0.916 & 0.033 & 3.229 & 0.477 \\
 &  &  & Local Tilt & 0.918 & 0.037 & 3.427 & 0.550 \\
 &  &  & Step Tilt & 0.913 & 0.031 & 3.143 & 0.462 \\
\cline{3-8}
 &  & \multirow[t]{4}{*}{RKHS} & Observed & 0.904 & 0.027 & 2.855 & 0.330 \\
 &  &  & Linear Tilt & 0.917 & 0.031 & 3.133 & 0.444 \\
 &  &  & Local Tilt & 0.921 & 0.035 & 3.339 & 0.555 \\
 &  &  & Step Tilt & 0.914 & 0.029 & 3.056 & 0.425 \\
\cline{3-8}
 &  & \multirow[t]{4}{*}{MLP} & Observed & 0.903 & 0.032 & 3.149 & 0.523 \\
 &  &  & Linear Tilt & 0.917 & 0.036 & 3.527 & 0.739 \\
 &  &  & Local Tilt & 0.920 & 0.040 & 3.775 & 0.899 \\
 &  &  & Step Tilt & 0.914 & 0.034 & 3.388 & 0.600 \\
\cline{1-8} \cline{2-8} \cline{3-8}
\bottomrule
\end{tabular}

}
\end{table}

\begin{table}[!th]
\caption{Results of the experiment using Dataset~3.}
 \label{tab:dataset3}
\centering
\resizebox{0.7\linewidth}{!}{
\begin{tabular}{llllrrrr}
\toprule
\multirow[c]{2}{*}{Radius} & \multirow[c]{2}{*}{IV Shift} & \multirow[c]{2}{*}{Radius Model} & \multirow[c]{2}{*}{Test Dist. of IVs} & \multicolumn{2}{c}{Cov. Ratio} & \multicolumn{2}{c}{Interval Length} \\
 &  & & &  Mean & SD & Mean & SD \\
\midrule
\multirow[t]{12}{*}{XZ-indexed} & \multirow[t]{4}{*}{Bins} & \multirow[t]{4}{*}{} & Observed & 0.899 & 0.022 & 34.507 & 141.721 \\
 &  &  & Linear Tilt & 0.898 & 0.023 & 34.532 & 141.043 \\
 &  &  & Local Tilt & 0.899 & 0.025 & 34.710 & 142.243 \\
 &  &  & Step Tilt & 0.898 & 0.024 & 34.487 & 140.992 \\
\cline{2-8} \cline{3-8}
 & \multirow[t]{4}{*}{Linear} & \multirow[t]{4}{*}{} & Observed & 0.916 & 0.021 & 37.134 & 149.733 \\
 &  &  & Linear Tilt & 0.917 & 0.024 & 38.053 & 156.486 \\
 &  &  & Local Tilt & 0.917 & 0.028 & 38.470 & 159.830 \\
 &  &  & Step Tilt & 0.917 & 0.024 & 37.813 & 154.825 \\
\cline{2-8} \cline{3-8}
 & \multirow[t]{4}{*}{RKHS} & \multirow[t]{4}{*}{} & Observed & 0.912 & 0.022 & inf & -  \\
 &  &  & Linear Tilt & 0.913 & 0.026 & inf & -  \\
 &  &  & Local Tilt & 0.916 & 0.030 & inf & -  \\
 &  &  & Step Tilt & 0.913 & 0.026 & inf & -  \\
\cline{1-8} \cline{2-8} \cline{3-8}
\multirow[t]{12}{*}{Z-indexed} & \multirow[t]{4}{*}{Bins} & \multirow[t]{4}{*}{} & Observed & 0.902 & 0.023 & 34.581 & 143.175 \\
 &  &  & Linear Tilt & 0.903 & 0.024 & 35.391 & 149.331 \\
 &  &  & Local Tilt & 0.904 & 0.026 & 35.540 & 149.977 \\
 &  &  & Step Tilt & 0.903 & 0.024 & 35.012 & 146.496 \\
\cline{2-8} \cline{3-8}
 & \multirow[t]{4}{*}{Linear} & \multirow[t]{4}{*}{} & Observed & 0.911 & 0.021 & 37.132 & 156.724 \\
 &  &  & Linear Tilt & 0.910 & 0.024 & 38.008 & 163.341 \\
 &  &  & Local Tilt & 0.910 & 0.028 & 38.744 & 169.183 \\
 &  &  & Step Tilt & 0.911 & 0.023 & 37.830 & 162.051 \\
\cline{2-8} \cline{3-8}
 & \multirow[t]{4}{*}{RKHS} & \multirow[t]{4}{*}{} & Observed & 0.911 & 0.021 & 35.587 & 139.016 \\
 &  &  & Linear Tilt & 0.910 & 0.023 & 36.831 & 147.884 \\
 &  &  & Local Tilt & 0.910 & 0.026 & 37.425 & 151.207 \\
 &  &  & Step Tilt & 0.911 & 0.023 & 36.658 & 146.601 \\
\cline{1-8} \cline{2-8} \cline{3-8}
\multirow[t]{16}{*}{X-indexed} & \multirow[t]{16}{*}{} & \multirow[t]{4}{*}{Linear} & Observed & 0.895 & 0.035 & 1323.543 & 11129.639 \\
 &  &  & Linear Tilt & 0.909 & 0.033 & 1402.822 & 11599.136 \\
 &  &  & Local Tilt & 0.915 & 0.035 & 836.614 & 6013.192 \\
 &  &  & Step Tilt & 0.906 & 0.035 & 1302.644 & 10630.369 \\
\cline{3-8}
 &  & \multirow[t]{4}{*}{Bins} & Observed & 0.898 & 0.032 & 243.714 & 817.299 \\
 &  &  & Linear Tilt & 0.910 & 0.032 & 274.721 & 922.389 \\
 &  &  & Local Tilt & 0.918 & 0.035 & 293.849 & 1032.362 \\
 &  &  & Step Tilt & 0.910 & 0.033 & 271.293 & 907.443 \\
\cline{3-8}
 &  & \multirow[t]{4}{*}{RKHS} & Observed & 0.898 & 0.035 & 145.999 & 972.635 \\
 &  &  & Linear Tilt & 0.912 & 0.036 & 167.911 & 1126.497 \\
 &  &  & Local Tilt & 0.916 & 0.040 & 163.141 & 1088.062 \\
 &  &  & Step Tilt & 0.909 & 0.038 & 310.516 & 2531.610 \\
\cline{3-8}
 &  & \multirow[t]{4}{*}{MLP} & Observed & 0.902 & 0.033 & inf & - \\
 &  &  & Linear Tilt & 0.911 & 0.037 & inf & - \\
 &  &  & Local Tilt & 0.915 & 0.042 & inf & - \\
 &  &  & Step Tilt & 0.908 & 0.036 & inf & - \\
\cline{1-8} \cline{2-8} \cline{3-8}
\bottomrule
\end{tabular}
}
\end{table}

\subsection{Data-Generating Process}
We consider three synthetic NPIV designs of increasing difficulty. In all designs, the instruments are sampled from a uniform distribution on $\sqb{-1,1}^{d_Z}$, the endogenous regressors are nonlinear functions of the IVs and latent variables, and the outcome is generated as
\[
Y=h_0(X)+\text{confounding}+\sigma(X,Z)V,
\]
where the latent variables shared by $X$ and $Y$ induce endogeneity and $\sigma(X,Z)$ yields heteroskedastic predictive uncertainty.

\paragraph{Dataset~1 (\texorpdfstring{$d_X=d_Z=1$}{dX=dZ=1}).}
Let $Z\sim \mathrm{Unif}[-1,1]$ and let $U,E,V$ be independent standard normal variables. We generate
\[
X=1.05\sin(\pi Z)+0.55U+0.12E,
\qquad
h_0(X)=\frac{\sin(\pi X)}{1+0.45X^2},
\]
and
\[
Y=h_0(X)+0.55U+\sigma(X,Z)V,
\]
with
\[
\sigma(X,Z)=0.35+0.10|X|+0.12\frac{Z+1}{2}+0.08\p{1+\exp(-XZ)}^{-1}.
\]
This one-dimensional design is the basis for the visualization in Figure~\ref{fig:surface}.

\paragraph{Dataset~2 (\texorpdfstring{$d_X=3,d_Z=1$}{dX=3,dZ=1}).}
Let $Z\sim \mathrm{Unif}[-1,1]$ and let $U_1,U_2,U_3,E_1,E_2,E_3,V$ be mutually independent standard normal variables. We set
\[
\begin{aligned}
X_1&=\sin(\pi Z)+0.50U_1+0.10E_1,\\
X_2&=Z^2-\frac13+0.42U_2+0.10E_2,\\
X_3&=\cos(\pi Z)+0.24(U_1+U_3)+0.10E_3,
\end{aligned}
\]
\[
h_0(X)=\sin(X_1)+0.45X_2^2-0.35X_1X_3+0.40\cos(X_3),
\]
and
\[
Y=h_0(X)+(0.40U_1-0.30U_2+0.22U_3)+\sigma(X,Z)V,
\]
where
\[
\sigma(X,Z)=0.32+0.08|X_1|+0.07X_2^2+0.10\frac{Z+1}{2}+0.05\max\cb{X_3Z,0}.
\]

\paragraph{Dataset~3 (\texorpdfstring{$d_X=d_Z=3$}{dX=dZ=3}).}
Let $Z=(Z_1,Z_2,Z_3)$ have independent coordinates distributed as $\mathrm{Unif}[-1,1]$, and let $U_1,U_2,U_3,E_1,E_2,E_3,V$ be mutually independent standard normal variables. We generate
\[
\begin{aligned}
X_1&=\sin\p{\pi(Z_1+0.30Z_2)}+0.42U_1+0.10E_1,\\
X_2&=Z_2Z_3+0.25Z_1^2+0.40U_2+0.10E_2,\\
X_3&=\cos(\pi Z_3)+0.22Z_1-0.18Z_2+0.28(U_1+U_3)+0.10E_3,
\end{aligned}
\]
\[
h_0(X)=0.60\sin(X_1)+0.38X_2^2-0.25X_1X_3+0.48\cos(X_3)+0.18X_1X_2,
\]
and
\[
Y=h_0(X)+(0.36U_1-0.24U_2+0.18U_3)+\sigma(X,Z)V,
\]
with
\[
\sigma(X,Z)=0.30+0.03(X_1^2+X_2^2)+0.07\max\{X_3,0\}+0.07\frac{Z_1+1}{2}+0.05\frac{Z_2Z_3+1}{2}.
\]
This is the most difficult design, because both the structural function and the heteroskedastic scale depend on higher-dimensional nonlinear interactions.

\subsection{Experimental Setup}
We set the nominal coverage level to $1-\alpha=0.9$ and run $100$ Monte Carlo replications for each synthetic design. Each replication uses $n_{\mathrm{train}}=1000$ observations to fit the NPIV base learner, $n_{\mathrm{cal}}=200$ observations for conformal calibration, and $n_{\mathrm{test}}=1000$ observations for evaluation. Across all methods, the center $\widehat h$ is a common series-2SLS NPIV estimator with cubic polynomial bases in both $X$ and $Z$.

For the exact classes $\calT_{XZ}$ and $\calT_Z$, we consider three finite-dimensional shift families. The \textit{Bins} family uses a four-bin basis defined on a one-dimensional projection. For scalar IVs, this projection is $Z$ itself. For vector IVs, it is the standardized first principal component of the pooled training and calibration IVs. The \textit{Linear} family uses an affine basis with an intercept, and the \textit{RKHS} family uses a Gaussian-kernel dictionary with four landmarks and kernel parameter $\gamma=0.2$. For the $X$-indexed class $\calT_X$, we consider four radius models: a linear basis, a six-bin basis, an RKHS basis with four landmarks and $\gamma=0.2$, and an MLP radius model. The conditional density ratio in the $X$-indexed learner is estimated by a neural network with hidden layers $(32,32)$ using two-fold cross-fitted training scores. We then split the calibration sample evenly into a shape-learning half and an independent final recalibration half. The $X$-indexed learner uses $\lambda=50$, $\kappa=0.05$, and, for the neural radius model, hidden layers $(32,32)$.

To evaluate robustness under IV shifts, we report four test laws. Let $u(z)$ be the standardized one-dimensional projection of $z$ computed from the pooled training and calibration IVs, and when $Z$ is multivariate, $u(z)$ is the first principal component after standardization. The raw weights are
\[
w_{\mathrm{obs}}(u)=1,\qquad
w_{\mathrm{lin}}(u)=\max\cb{1+0.95u,0.05},
\]
\[
w_{\mathrm{step}}(u)=
\begin{cases}
e, & u>0,\\
1, & u\le 0,
\end{cases}
\qquad
w_{\mathrm{loc}}(u)=0.20+1.60\exp\!\sqb{-\frac12\p{\frac{u-0.75}{0.35}}^2}.
\]
For each method and scenario, the reported coverage ratio and interval length are weighted test-sample averages computed with the normalized versions of these weights. An entry of \texttt{inf} in the length column means that at least one replication produced an unbounded interval, which makes the across-replication mean infinite.

\subsection{Results}
Tables~\ref{tab:dataset1}--\ref{tab:dataset3} summarize the synthetic experiments. Overall, the exact classes $\calT_{XZ}$ and $\calT_Z$ behave as intended for the Bins and Linear shift families: their coverage ratios stay close to the nominal level across the observed law and the three shifted IV laws. In Dataset~1, exact interval lengths are around $4$ for both classes, with the Linear specification slightly shorter than Bins and RKHS. In Dataset~2, the same pattern remains, and the exact finite-length intervals are even shorter, around $2.7$ to $3.1$. In Dataset~3, the exact procedures remain finite for Bins and Linear, but the intervals become much longer, reflecting the more difficult three-dimensional nonlinear design.

The RKHS proxy is more fragile in the exact classes. In Dataset~1, both $\calT_{XZ}$ and $\calT_Z$ with RKHS remain finite and are mildly conservative. In Dataset~2, the $Z$-indexed RKHS procedure is still well behaved, but the $(X,Z)$-indexed RKHS procedure yields unbounded intervals in at least one replication. In Dataset~3, the same contrast becomes sharper: the $Z$-indexed RKHS method remains finite, whereas the $(X,Z)$-indexed RKHS method again produces \texttt{inf}. This pattern is consistent with the fact that richer exact shift classes are harder to calibrate stably as the dimension of the conditioning variable grows.

The $X$-indexed results illustrate the trade-off between interpretability and difficulty. In Dataset~1, the Linear $X$-indexed model is the most stable $C(X)$ rule, with coverage near the nominal level and only a moderate increase in length relative to the exact methods. The RKHS $X$-indexed model is workable but more variable, while the Bins and especially the MLP radius models are unstable. In Dataset~2, the Linear, Bins, and RKHS $X$-indexed models all remain finite and near nominal, showing that fixed-shift recalibration can work well in a moderate-dimensional design. In Dataset~3, however, all $X$-indexed procedures become much longer and more variable, especially the Linear and MLP models. This is exactly the difficult regime for the target class $\calT_X$: the final interval is required to depend only on $X$, even though the predictive uncertainty in the data-generating process varies with both $X$ and $Z$.

\subsection{Visualization of Prediction Intervals}
Figure~\ref{fig:surface} visualizes the fitted intervals for Dataset~1 under the observed IV distribution using a single replication with seed $2026$. For the exact procedures, we use the Linear shift family, and for the $X$-indexed procedure we use the Linear radius model. The top row plots the lower and upper interval surfaces over a $41\times 41$ grid in $(x,z)$. The middle row fixes $x=0$ and plots the $Y$--$Z$ slices, while the bottom row fixes $z=0$ and plots the $Y$--$X$ slices.

Note that large or infinite-length prediction intervals do not necessarily imply poor performance. If one of the upper or lower bounds is large or infinite, the interval length becomes infinite. However, the opposite bound can still be finite, and if so, we can still use such prediction intervals in practice. A more important metric is the coverage ratio.

\begin{figure}
    \centering
    \includegraphics[width=0.9\linewidth]{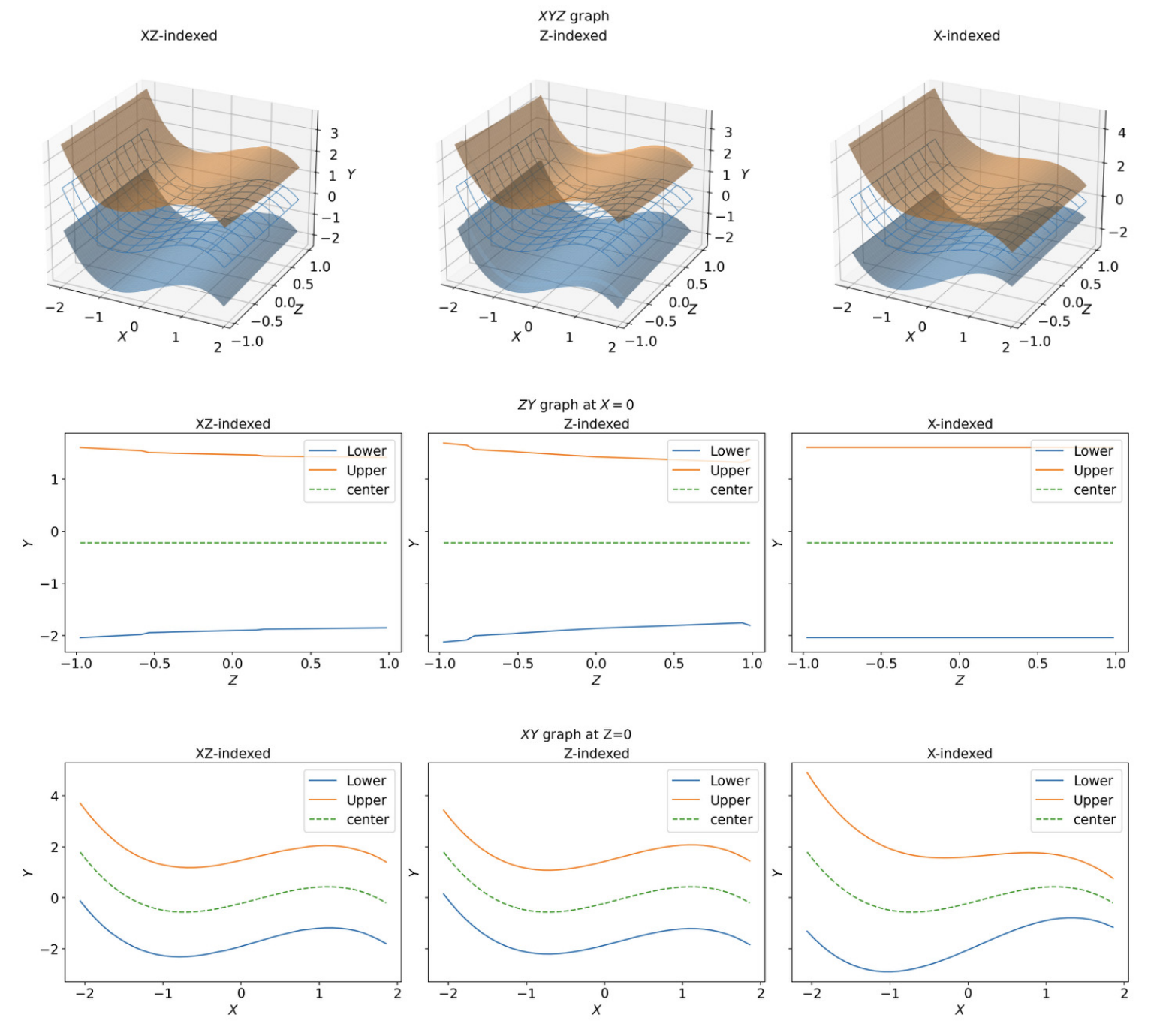}
    \caption{Visualization of prediction intervals for Dataset~1 under the observed IV distribution. The $(X,Z)$-indexed and $Z$-indexed procedures use the Linear shift family, and the $X$-indexed procedure uses the Linear radius model. The top row shows the interval surfaces over $(x,z)$, the middle row shows $Y$--$Z$ slices at $x=0$, and the bottom row shows $Y$--$X$ slices at $z=0$.}
    \label{fig:surface}
\end{figure}

The figure makes the distinction among the three radius classes visually transparent. In the middle row, the $(X,Z)$-indexed and $Z$-indexed intervals vary with $z$, whereas the $X$-indexed interval is flat in $z$ by construction. In the bottom row, all three methods vary with $x$, but the width profiles differ: the exact methods adapt to IV-indexed uncertainty, while the $X$-indexed interval absorbs that uncertainty into a single $X$-only radius. In Dataset~1, the $(X,Z)$-indexed and $Z$-indexed surfaces are quite similar, which suggests that much of the relevant heteroskedasticity can already be captured by allowing the radius to depend on the IV alone.

\subsection{Discussion}
The simulation evidence supports the main conceptual message of the paper. The exact $Z$-indexed class $\calT_Z$ is the most reliable practical exact procedure: it consistently delivers near-nominal coverage under the shifted IV laws while keeping the interval lengths comparable to, and often slightly shorter than, those of the broader exact class $\calT_{XZ}$. The $X$-indexed target $\calT_X$ remains substantively appealing, because it yields intervals of the form $C(X)$, but its finite-sample performance depends much more strongly on the radius model and on the complexity of the design. In low- and moderate-dimensional settings, simple $X$-indexed models such as the Linear basis can work well. In harder settings, the price of insisting on $Z$-free final intervals can be substantial.

These results also clarify how to interpret richer model classes in practice. RKHS-based exact classes can become unstable in more difficult designs, and overly flexible $X$-indexed radius models can produce very long intervals. Accordingly, the finite-dimensional Bins and Linear exact classes are the most robust baselines in the current experiments, while richer classes are better viewed as sensitivity analyses rather than default choices.

\section{Empirical Analysis}
We next study two empirical IV datasets using the same nominal level, shift scenarios, and evaluation metrics as in Section~7. In the \textsc{CigarettesSW} data \citep{Stock2007introductionto}, we take
\[
Y=\log\p{\texttt{packs}},\qquad
X=\log\p{\texttt{price}/\texttt{cpi}},\qquad
Z=\log\p{\texttt{taxs}/\texttt{cpi}},
\]
and include $\log\p{\texttt{income}/\texttt{population}/\texttt{cpi}}$ and a 1995 year indicator as exogenous controls. Each repetition uses a year-stratified split with $24$ training observations, $40$ calibration observations, and $32$ test observations. In the \textsc{CollegeDistance} data \citep{Stock2007introductionto,Rouse1995democratizationor}, we take
\[
Y=\log\p{\max\cb{\texttt{wage},10^{-2}}},\qquad
X=\texttt{education},\qquad
Z=\texttt{distance},
\]
and use \texttt{score}, \texttt{tuition}, \texttt{unemp}, and demographic dummy variables as controls. To keep the exact conditional procedures computationally manageable, each repetition first subsamples $1800$ observations and then uses a $600/600/600$ training, calibration, and test split. All empirical results are averaged over $100$ random replications.

\begin{table}[!th]
\caption{Results of the experiment using the CigarettesSW dataset.}
\label{tab:CigarettesSW}
    \centering
\resizebox{0.7\linewidth}{!}{
\begin{tabular}{llllrrrr}
\toprule
\multirow[c]{2}{*}{Radius} & \multirow[c]{2}{*}{IV Shift} & \multirow[c]{2}{*}{Radius Model} & \multirow[c]{2}{*}{Test Dist. of IVs} & \multicolumn{2}{c}{Cov. Ratio} & \multicolumn{2}{c}{Interval Length} \\
 &  & & &  Mean & SD & Mean & SD \\
\midrule
\multirow[t]{12}{*}{XZ-indexed} & \multirow[t]{4}{*}{Bins} & \multirow[t]{4}{*}{} & Observed & 0.911 & 0.055 & 1.489 & 2.392 \\
 &  &  & Linear Tilt & 0.918 & 0.060 & 1.282 & 1.511 \\
 &  &  & Local Tilt & 0.918 & 0.070 & 1.335 & 1.621 \\
 &  &  & Step Tilt & 0.919 & 0.057 & 1.360 & 1.816 \\
\cline{2-8} \cline{3-8}
 & \multirow[t]{4}{*}{Linear} & \multirow[t]{4}{*}{} & Observed & 0.928 & 0.051 & inf & -  \\
 &  &  & Linear Tilt & 0.935 & 0.056 & inf & -  \\
 &  &  & Local Tilt & 0.936 & 0.064 & inf & -  \\
 &  &  & Step Tilt & 0.934 & 0.051 & inf & -  \\
\cline{2-8} \cline{3-8}
 & \multirow[t]{4}{*}{RKHS} & \multirow[t]{4}{*}{} & Observed & 0.957 & 0.040 & inf & -  \\
 &  &  & Linear Tilt & 0.960 & 0.041 & inf & -  \\
 &  &  & Local Tilt & 0.969 & 0.038 & inf & -  \\
 &  &  & Step Tilt & 0.962 & 0.040 & inf & -  \\
\cline{1-8} \cline{2-8} \cline{3-8}
\multirow[t]{12}{*}{Z-indexed} & \multirow[t]{4}{*}{Bins} & \multirow[t]{4}{*}{} & Observed & 0.915 & 0.051 & 1.524 & 2.545 \\
 &  &  & Linear Tilt & 0.919 & 0.058 & 1.299 & 1.668 \\
 &  &  & Local Tilt & 0.916 & 0.069 & 1.337 & 1.756 \\
 &  &  & Step Tilt & 0.916 & 0.059 & 1.374 & 1.960 \\
\cline{2-8} \cline{3-8}
 & \multirow[t]{4}{*}{Linear} & \multirow[t]{4}{*}{} & Observed & 0.920 & 0.054 & 1.476 & 2.385 \\
 &  &  & Linear Tilt & 0.928 & 0.060 & 1.340 & 1.774 \\
 &  &  & Local Tilt & 0.926 & 0.072 & 1.329 & 1.633 \\
 &  &  & Step Tilt & 0.926 & 0.055 & 1.382 & 1.956 \\
\cline{2-8} \cline{3-8}
 & \multirow[t]{4}{*}{RKHS} & \multirow[t]{4}{*}{} & Observed & 0.954 & 0.048 & inf & -  \\
 &  &  & Linear Tilt & 0.954 & 0.054 & inf & -  \\
 &  &  & Local Tilt & 0.960 & 0.062 & inf & -  \\
 &  &  & Step Tilt & 0.956 & 0.053 & inf & -  \\
\cline{1-8} \cline{2-8} \cline{3-8}
\multirow[t]{16}{*}{X-indexed} & \multirow[t]{16}{*}{} & \multirow[t]{4}{*}{Linear} & Observed & 0.908 & 0.080 & 1.587 & 2.982 \\
 &  &  & Linear Tilt & 0.965 & 0.045 & inf & -  \\
 &  &  & Local Tilt & 0.997 & 0.017 & inf & -  \\
 &  &  & Step Tilt & 0.942 & 0.064 & 2.889 & 12.223 \\
\cline{3-8}
 &  & \multirow[t]{4}{*}{Bins} & Observed & 0.908 & 0.078 & 30.228 & 218.774 \\
 &  &  & Linear Tilt & 0.963 & 0.050 & inf & -  \\
 &  &  & Local Tilt & 0.998 & 0.014 & inf & -  \\
 &  &  & Step Tilt & 0.940 & 0.065 & 75.474 & 520.323 \\
\cline{3-8}
 &  & \multirow[t]{4}{*}{RKHS} & Observed & 0.915 & 0.080 & 1.656 & 3.127 \\
 &  &  & Linear Tilt & 0.971 & 0.042 & inf & -  \\
 &  &  & Local Tilt & 0.999 & 0.007 & inf & -  \\
 &  &  & Step Tilt & 0.946 & 0.067 & 2.327 & 5.911 \\
\cline{3-8}
 &  & \multirow[t]{4}{*}{MLP} & Observed & 0.909 & 0.076 & 104.428 & 1022.857 \\
 &  &  & Linear Tilt & 0.961 & 0.050 & inf & -  \\
 &  &  & Local Tilt & 0.999 & 0.009 & inf & -  \\
 &  &  & Step Tilt & 0.942 & 0.064 & 110.968 & 1075.274 \\
\cline{1-8} \cline{2-8} \cline{3-8}
\bottomrule
\end{tabular}

}
\end{table}

\begin{table}[!th]
\caption{Results of the experiment using the CollegeDistance dataset.}
\label{tab:CollegeDistance}
\centering
\resizebox{0.7\linewidth}{!}{
\begin{tabular}{llllrrrr}
\toprule
\multirow[c]{2}{*}{Radius} & \multirow[c]{2}{*}{IV Shift} & \multirow[c]{2}{*}{Radius Model} & \multirow[c]{2}{*}{Test Dist. of IVs} & \multicolumn{2}{c}{Cov. Ratio} & \multicolumn{2}{c}{Interval Length} \\
 &  & & &  Mean & SD & Mean & SD \\
\midrule
\multirow[t]{12}{*}{XZ-indexed} & \multirow[t]{4}{*}{Bins} & \multirow[t]{4}{*}{} & Observed & 0.901 & 0.017 & 2.638 & 4.845 \\
 &  &  & Linear Tilt & 0.893 & 0.020 & 2.521 & 4.446 \\
 &  &  & Local Tilt & 0.882 & 0.028 & 2.386 & 4.043 \\
 &  &  & Step Tilt & 0.891 & 0.021 & 2.502 & 4.402 \\
\cline{2-8} \cline{3-8}
 & \multirow[t]{4}{*}{Linear} & \multirow[t]{4}{*}{} & Observed & 0.903 & 0.017 & 2.668 & 4.744 \\
 &  &  & Linear Tilt & 0.903 & 0.019 & 2.626 & 4.599 \\
 &  &  & Local Tilt & 0.904 & 0.025 & 2.570 & 4.429 \\
 &  &  & Step Tilt & 0.903 & 0.021 & 2.622 & 4.605 \\
\cline{2-8} \cline{3-8}
 & \multirow[t]{4}{*}{RKHS} & \multirow[t]{4}{*}{} & Observed & 0.904 & 0.015 & inf & -  \\
 &  &  & Linear Tilt & 0.908 & 0.017 & inf & -  \\
 &  &  & Local Tilt & 0.915 & 0.024 & inf & -  \\
 &  &  & Step Tilt & 0.907 & 0.019 & inf & -  \\
\cline{1-8} \cline{2-8} \cline{3-8}
\multirow[t]{12}{*}{Z-indexed} & \multirow[t]{4}{*}{Bins} & \multirow[t]{4}{*}{} & Observed & 0.899 & 0.018 & 2.671 & 5.017 \\
 &  &  & Linear Tilt & 0.900 & 0.019 & 2.619 & 4.750 \\
 &  &  & Local Tilt & 0.900 & 0.023 & 2.558 & 4.492 \\
 &  &  & Step Tilt & 0.899 & 0.020 & 2.608 & 4.712 \\
\cline{2-8} \cline{3-8}
 & \multirow[t]{4}{*}{Linear} & \multirow[t]{4}{*}{} & Observed & 0.898 & 0.018 & 2.573 & 4.519 \\
 &  &  & Linear Tilt & 0.899 & 0.020 & 2.555 & 4.447 \\
 &  &  & Local Tilt & 0.900 & 0.025 & 2.534 & 4.378 \\
 &  &  & Step Tilt & 0.898 & 0.020 & 2.553 & 4.448 \\
\cline{2-8} \cline{3-8}
 & \multirow[t]{4}{*}{RKHS} & \multirow[t]{4}{*}{} & Observed & 0.901 & 0.017 & inf & -  \\
 &  &  & Linear Tilt & 0.904 & 0.018 & inf & -  \\
 &  &  & Local Tilt & 0.908 & 0.024 & inf & -  \\
 &  &  & Step Tilt & 0.903 & 0.019 & inf & -  \\
\cline{1-8} \cline{2-8} \cline{3-8}
\multirow[t]{16}{*}{X-indexed} & \multirow[t]{16}{*}{} & \multirow[t]{4}{*}{Linear} & Observed & 0.896 & 0.019 & 151.561 & 1492.597 \\
 &  &  & Linear Tilt & 0.901 & 0.020 & 154.493 & 1521.789 \\
 &  &  & Local Tilt & 0.906 & 0.026 & 151.402 & 1490.779 \\
 &  &  & Step Tilt & 0.901 & 0.021 & 140.837 & 1385.177 \\
\cline{3-8}
 &  & \multirow[t]{4}{*}{Bins} & Observed & 0.895 & 0.021 & 93.181 & 506.650 \\
 &  &  & Linear Tilt & 0.900 & 0.022 & 97.514 & 537.866 \\
 &  &  & Local Tilt & 0.905 & 0.028 & 100.555 & 545.249 \\
 &  &  & Step Tilt & 0.900 & 0.023 & 95.276 & 521.839 \\
\cline{3-8}
 &  & \multirow[t]{4}{*}{RKHS} & Observed & 0.895 & 0.019 & 79.233 & 770.463 \\
 &  &  & Linear Tilt & 0.901 & 0.021 & 80.615 & 784.204 \\
 &  &  & Local Tilt & 0.906 & 0.027 & 78.894 & 767.130 \\
 &  &  & Step Tilt & 0.901 & 0.021 & 73.547 & 713.580 \\
\cline{3-8}
 &  & \multirow[t]{4}{*}{MLP} & Observed & 0.895 & 0.020 & 2.544 & 4.955 \\
 &  &  & Linear Tilt & 0.901 & 0.021 & 2.549 & 4.962 \\
 &  &  & Local Tilt & 0.905 & 0.028 & 2.564 & 5.063 \\
 &  &  & Step Tilt & 0.901 & 0.022 & 2.543 & 4.938 \\
\cline{1-8} \cline{2-8} \cline{3-8}
\bottomrule
\end{tabular}
}
\end{table}

Tables~\ref{tab:CigarettesSW} and \ref{tab:CollegeDistance} show a sharper contrast between stable and unstable specifications than in the synthetic designs. In \textsc{CigarettesSW}, the exact Bins procedures and the exact $Z$-indexed Linear procedure remain finite and close to the nominal coverage level, whereas the richer exact classes frequently produce unbounded intervals. For the $X$-indexed class, the Observed and Step-Tilt scenarios are still workable for the Linear and RKHS radius models, but the Linear-Tilt and Local-Tilt scenarios often lead to \texttt{inf}. This is the small-sample regime in which aggressive reweightings of the IV distribution are hardest to support, so coarse exact shift classes are the most reliable practical choices.

In \textsc{CollegeDistance}, the exact Bins and Linear procedures are much more stable. Their coverage ratios stay close to the nominal level under all four scenarios, and their interval lengths are around $2.5$ to $2.7$. By contrast, the exact RKHS procedures again produce \texttt{inf}, which indicates that the richer exact shift class is still too unstable in this empirical design. The most striking pattern appears in the $X$-indexed class: the Linear, Bins, and RKHS radius models produce very long intervals, whereas the MLP radius model remains stable, with coverage around $0.895$ to $0.905$ and interval length around $2.54$ to $2.56$. Thus, in the larger empirical dataset, a flexible nonlinear $q(x)$ can be essential for obtaining a practical $C(X)$ interval.

Taken together, the empirical results complement the simulations. Exact IV-specific procedures based on Bins or Linear shift classes are the most robust default options across datasets. The $X$-indexed target can be competitive, but only when the radius model matches the complexity of the application and the final fixed-shift recalibration remains numerically stable.

\section{Conclusion}
We proposed IV-CCP for constructing prediction intervals in NPIV with a distribution-free finite-sample coverage guarantee. Our proposed prediction interval construction allows three types of radii, $(X, Z)$-indexed, $Z$-indexed, and $X$-indexed, whose classes are denoted by $\calT_{XZ}$, $\calT_Z$, and $\calT_X$, respectively. The broad benchmark class is $\calT_{XZ}$, which produces intervals whose radius may vary jointly with $(X,Z)$ and admits exact finite-sample coverage over joint shifts. The exact IV-specific class is $\calT_Z$, which keeps the center equal to $\widehat h(X)$ and yields IV-CCP, an exact finite-sample procedure over policy-relevant IV shifts. The class $\calT_X$ produces intervals $C(X)$, which may be the most natural operational target when the final prediction rule should depend on the structural regressor alone. For these approaches, we provide theoretical guarantees and confirm their validity through simulation studies and empirical analyses.

\bibliography{arXiv2.bbl}

\onecolumn
\appendix

\section{Detailed Related Work}\label{appdx:detailed_related_work}
The contribution of this paper is easiest to locate once the method is decomposed into the objects separated in the main text. The first object is structural estimation under endogeneity. This is the step that produces the base learner $\widehat h$ used in the score construction \eqref{eq:scores}. The second object is the radius class $\calT$, which determines whether the interval is allowed to vary with $(X,Z)$, only with $Z$, or only with $X$. The third object is the shift class $\calF$, which determines the family of the probability laws over which the coverage constraint \eqref{eq:moments_F} is required to hold. The related work below follows this decomposition, because it makes the connection to the rest of the paper transparent.

\subsection{Structural Models and the Moment Restrictions}
The econometric language behind our setup is the language of moment restrictions. In classical linear IV models, exogeneity is expressed through orthogonality conditions between the IV and the structural error, and overidentified models are naturally handled by GMM \citep{Sargan1958theestimation,Hansen1982largesample}. More broadly, \citet{Chamberlain1987asymptoticefficiency} studies estimation under conditional moment restrictions and provides the semiparametric framework that underlies many later developments in IV and NPIV. This background is directly relevant for our paper, because the structural equation \eqref{eq:npiv} and the conditional moment restriction \eqref{eq:cmc} are special cases of that general framework.

This literature also clarifies what our paper is not trying to do. In linear IV and GMM, the inferential target is typically a structural parameter or a finite-dimensional functional identified by the moments. In our setting, the structural learner still comes from that tradition, but the final target is a prediction interval for a future outcome. This distinction matters throughout the paper. The structural literature explains how to estimate $h_0$, whereas our prediction layer uses the resulting residual scores to enforce predictive validity.

\subsection{NPIV and Estimation under the Conditional Moment Restriction}
NPIV extends IV from finite-dimensional linear models to an unknown structural function. Foundational papers show how identification can be written as an inverse problem under a conditional moment restriction, and how estimation therefore requires regularization \citep{Newey2003instrumentalvariable,Ai2003efficientestimation,Darolles2011nonparametricinstrumental,Carrasco2007linearinverse}. The ill-posedness of NPIV, and its implications for convergence rates and smoothing choices, are central in \citet{Hall2005nonparametricmethods}. Related semiparametric developments, such as \citet{Blundell2007seminonparametriciv}, and broader conditional-moment formulations, such as \citet{Chen2012estimationof} and \citet{Otsu2011empiricallikelihood}, place NPIV inside a larger class of problems where the unknown object is a function and the identifying information is a conditional moment.

This literature is the natural source of the base learner in our algorithm. Neither the joint class $\calT_{XZ}$ nor the IV-indexed class $\calT_Z$ nor the $X$-indexed class $\calT_X$ prescribes a specific NPIV estimator. Any estimator that outputs a structural prediction rule $\widehat h(X)$ can be used in the training step, including sieve, minimum-distance, Tikhonov, and related regularized procedures.

\subsection{Asymptotic Inference in NPIV}
A substantial econometric literature studies inference for the structural function or its functionals in NPIV and related conditional moment models. Representative results include asymptotic normality for NPIV estimators \citep{Horowitz2007asymptoticnormality}, uniform confidence bands \citep{Horowitz2012uniformconfidence}, sieve Wald and QLR inference for semi/nonparametric conditional moment models \citep{Chen2015sievewald}, and sup-norm rates with uniform inference for nonlinear functionals \citep{Chen2018optimalsupnorm}. These papers are closely connected to our theory section, because they clarify how ill-posedness, approximation error, and regularization affect the quality of structural estimation.

At the same time, the inferential target is fundamentally different. Those papers develop confidence procedures for $h_0$ or for functionals of $h_0$, typically under asymptotic approximations and structural regularity conditions. By contrast, the object of this paper is the future outcome $Y_{n+1}$, and the guarantee in Theorem~\ref{thm:shift_robust} is finite-sample and distribution-free conditional on the training split. The connection to the NPIV inference literature appears through efficiency rather than through asymptotic Gaussianity. Proposition~\ref{prop:stability} and Theorem~\ref{thm:length} show that a smaller structural estimation error produces a shorter interval in the exact IV-indexed class, so advances in NPIV estimation feed directly into the efficiency of the conformal layer.

\subsection{Machine learning estimators for IV and conditional moments}
A recent machine learning literature develops flexible learners for IV and conditional moment problems. One group of methods extends the two-stage logic of 2SLS. \citet{Hartford2017deepiv} estimates the first-stage treatment distribution with neural networks and integrates the second-stage predictor over that estimated distribution. \citet{Singh2019kernelinstrumental} develops a kernel version of nonlinear IV regression in RKHS, and \citet{Xu2021learningdeep} learns deep representations for IVs and treatments before the two regression stages. These methods are directly compatible with our framework, because they produce exactly the object needed in \eqref{eq:scores}; that is, a base learner $\widehat h$.

Another group of methods formulates conditional moment estimation as a minimax or adversarial problem. This includes adversarial GMM \citep{Lewis2018adversarialgeneralized}, DeepGMM \citep{Bennett2019deepgeneralized}, the variational method of moments \citep{Bennett2020thevariational}, DualIV \citep{Muandet2020dualinstrumental}, neural min-max estimators for structural equations \citep{Liao2020provablyefficient}, and minimax-optimal procedures for IV regression \citep{Dikkala2020minimaxestimation}. The conceptual connection to our paper is important but subtle. In that literature, the adversarial or critic class is used to estimate $h_0$ by searching for violated moments. In our paper, the class $\calF$ is not part of the structural estimator. It encodes the test environments over which the predictive coverage requirement \eqref{eq:moments_F} must hold.

\subsection{Conformal Prediction}
Classical conformal prediction provides exact finite-sample marginal coverage under exchangeability \citep{Vovk2005algorithmiclearning}. A large literature improves efficiency by adapting interval length to heteroskedasticity or local structure, for example through conformalized quantile regression \citep{Romano2019conformalizedquantile}, distributional conformal prediction \citep{Chernozhukov2021distributionalconformal}, and localized conformal prediction \citep{Guan2022localizedconformal}. These papers matter for our setup because they show how conformal methods can wrap around arbitrary black-box predictors while preserving finite-sample marginal validity.

For our paper, however, the central issue is not only efficiency under marginal validity. It is how to move toward a validity notion indexed by the IV. Early work such as \citet{Vovk2012conditionalvalidity} studies different notions of conditional validity for conformal predictors, while \citet{Lei2013distributionfree} and \citet{Barber2020thelimits} show why exact distribution-free finite-sample conditional coverage of the form \eqref{eq:exact_conditional} cannot generally be achieved with nontrivial intervals. This impossibility result is the reason the main text replaces exact conditional coverage by the relaxed moment condition \eqref{eq:moments_F}.

Several recent strands of the literature study structured relaxations that are closer to what we need. When the test environment is a single specified test distribution, weighted conformal prediction provides valid coverage under covariate shift if the likelihood ratio is known or can be estimated \citep{Tibshirani2019conformalprediction}. When the shift is driven by a finite set of groups, \citet{Bhattacharyya2025groupweighted} studies group-weighted conformal prediction. A related line studies subgroup or multivalid guarantees over finite collections of sets \citep{Jung2023batchmultivalid,Bastani2022practicaladversarial}. These papers are directly relevant for our design of the shift class $\calF$. They also clarify the main distinction from our setting. We do not target one known test distribution. Instead, in the exact IV-indexed class $\calT_Z$ we require coverage to hold simultaneously over the family of IV tilts encoded by $\calF$.

The closest methodological antecedent is \citet{Gibbs2025conformalprediction}. That paper reformulates conditional validity as coverage over a class of covariate shifts and develops the augmented quantile-regression calibration algorithm that underlies our exact constructions for $\calT_{XZ}$ and $\calT_Z$. OOur paper specializes that framework to IV settings, treats the IV $Z$ as the variable indexing the protected shifts, and places this exact conformal layer alongside the importance-weighted $X$-indexed formulation of \citet{Kato2022learningcausal}.

\subsection{Position of this Paper}
Putting these strands of the literature together makes the contribution of the paper precise. Relative to the IV and NPIV literatures, we do not propose a new identification strategy or a new estimator of $h_0$. Relative to the conformal literature, we do not target marginal coverage only, and we do not assume a single known target distribution except in the $X$-indexed fixed-shift recalibration step. Instead, the main text organizes predictive inference in NPIV around three radius classes presented in the order $\calT_{XZ}$, $\calT_Z$, and $\calT_X$.

The class $\calT_{XZ}$ is the broad exact benchmark, because the variable indexing the radius and the variable indexing the shift family coincide. The class $\calT_Z$ is the exact IV-specific conformal class and the first main focus of the paper. The class $\calT_X$ is the target $C(X)$ and the second main focus. In that class, the paper combines importance-weighted conditional-moment learning with a fixed-shift weighted conformal recalibration layer. This is the sense in which the paper sits between structural estimation and predictive inference while being explicit about the difference between the prediction interval one ultimately wants and the exact subclasses current finite-sample conformal machinery can certify.

\section{Technical Discussion of Candidate Shift Classes}\label{appdx:shiftclasses}
This appendix records several concrete constructions that convert rich candidate shift families into the finite-dimensional feature map $\phi$ required by the exact conformal procedures for $\calT_{XZ}$ and $\calT_Z$. The common pattern is to start from an ideal class of nonnegative density ratios and then choose a finite dictionary whose span approximates that ideal class. This is directly parallel to the way sieve and low-rank approximations are used in the NPIV literature, including the appendices of \citet{Dikkala2020minimaxestimation}. The difference is that the present paper approximates a class of distribution shifts in the test environment, not a class of structural functions.

\subsection{Nested Approximations}
Let $\phi^{(r)}\colon \calZ\to\bbR^{d_r}$ denote a sequence of feature maps such that
\[
\calG_r=\cb{z\mapsto \beta^\top\phi^{(r)}(z): \beta\in\bbR^{d_r}}
\]
is nested in $r$, that is, $\calG_r\subset \calG_{r+1}$. Define the corresponding nonnegative shift classes by $\calF_r=\calG_r\cap\cb{f\ge 0}$. Then the ambiguity sets are nested as well,
\[
\calQ(\calF_r)\subset \calQ(\calF_{r+1}).
\]
Thus increasing $r$ strengthens the protected family of the test environments. For every fixed $r$, Theorems~\ref{thm:joint_shift_robust} and \ref{thm:shift_robust} remain exact and finite-sample. The price of increasing $r$ appears through the dimension term in Theorem~\ref{thm:length}.

\subsection{Discretization and Multiscale Partitions}
Suppose for simplicity that $Z$ is scalar and supported on an interval $\sqb{a,b}$. Let
\[
a=u_0<u_1<\cdots<u_G=b
\]
be a grid, and define cells $I_g=\sqb{u_{g-1},u_g}$ for $g=1,\dots,G$. The most basic discretization takes
\[
\phi(z)=\p{1,\mathbbm{1}\sqb{z\in I_1},\dots,\mathbbm{1}\sqb{z\in I_G}}.
\]
Then any tilt in $\calF_\phi$ is constant within cells, and the fitted radius is piecewise constant. A multiscale variant uses several nested grids, for example dyadic partitions. The same idea applies to vector-valued or mixed-type IVs. If a tree or forest trained on $\mathcal I_{\mathrm{tr}}$ partitions $\calZ$ into leaves $L_1,\dots,L_G$, then the leaf indicators define a finite-dimensional map.

\subsection{Sieve and Tensor-Product Constructions}
Let $\psi_1,\psi_2,\dots$ be a sequence of basis functions on $\calZ$. A growing sieve class takes the form
\[
\phi^{(r)}(z)=\p{1,\psi_1(z),\dots,\psi_r(z)}.
\]
Common choices include B-splines, wavelets, Fourier series, and orthogonal polynomials. If one believes that the relevant density ratios are smooth, then the approximation error of the sieve is the natural price of replacing an infinite-dimensional smoothness class by the finite-dimensional span.

For $Z=(Z_1,\dots,Z_{k_Z})$, one may use additive or tensor-product sieves. The additive construction keeps the dimension manageable,
\[
\phi^{(r)}(z)=\Bigp{1,\psi_{1,1}(z_1),\dots,\psi_{1,r_1}(z_1),\dots,\psi_{k_Z,1}(z_{k_Z}),\dots,\psi_{k_Z,r_{k_Z}}(z_{k_Z})}.
\]
Tensor products capture interactions,
\[
\phi_{j_1,\dots,j_q}(z)=\prod_{\ell=1}^q \psi_{\ell,j_\ell}(z_\ell),
\]
but should be used sparingly, because the resulting dimension grows quickly.

\subsection{RKHS Proxies, Nystr\"om Dictionaries, and Random Features}
Let $K$ be a positive-definite kernel on $\calZ$ and $\calH_K$ its RKHS. The ideal smooth shift class is
\[
\calF^{\infty}_{K,B}=\cb{f\in \calH_K: \norm{f}_{\calH_K}\le B,\ f\ge 0}.
\]
Because the main paper develops exact finite-sample theory for finite-dimensional feature maps, the direct implementation of $\calF^{\infty}_{K,B}$ is outside the scope of the exact theorem. A finite-dimensional proxy is obtained by choosing landmarks $u_1,\dots,u_r$ and setting
\[
\phi_r(z)=\p{1,K(z,u_1),\dots,K(z,u_r)}.
\]
A normalized Nystr\"om version replaces the raw kernel sections by a stabilized low-rank basis. For shift-invariant kernels, random features provide another finite-dimensional proxy,
\[
\phi_r(z)=\p{1,\varphi_1(z),\dots,\varphi_r(z)},\qquad \varphi_j(z)=\sqrt{\frac{2}{r}}\cos\p{\omega_j^\top z+b_j}.
\]
Finally, when $Z$ is high-dimensional or structured, one may compose the kernel with a representation map $g$ learned on the training split and use features of the form $K\p{g(z),c_j}$.

\subsection{Shape-Restricted Approximations}
Suppose again that $Z$ is scalar and let $u_1<\cdots<u_G$ be a grid. A piecewise-linear approximation writes
\[
f(z)=\sum_{g=1}^G \theta_g h_g(z),
\]
where $h_1,\dots,h_G$ are hat functions on the grid. Monotone or convex shift families are obtained by adding linear inequality constraints to the coefficients. For example, monotonicity may be encoded by
\[
\theta_1\le \theta_2\le \cdots\le \theta_G,
\]
and convexity by second-difference inequalities. The baseline theorem does not analyze this constrained-cone version, but it remains a natural structured extension.

\subsection{The Separate Radius-Class Design Problem for \texorpdfstring{$\calT_X$}{TX}}
When the final interval is $X$-indexed, one must also choose a class for $\tau_X$. The same approximation ideas apply on the $X$ side, but they should not be conflated with the shift class on the $Z$ side. One may use bins in $X$, additive or tensor-product sieves on $X$, RKHS models on $X$, learned representations of $X$, or shape-restricted classes for $\tau_X$. The shift class $\calF$ determines which probability laws must be protected. The radius class for $\tau_X$ determines how much heterogeneity the prediction interval may have across $X$.

\section{Proofs}\label{appdx:proofs}

\begin{proof}[Proof of Proposition~\ref{prop:moment_char}]
Let
\[
U_\tau=\mathbbm{1}\sqb{Y_{n+1}\in \widehat C_\tau(X_{n+1},Z_{n+1})}-(1-\alpha).
\]
If \eqref{eq:exact_conditional} holds, then $\bbE\sqb{U_\tau\mid Z_{n+1}}=0$ almost surely. Therefore, for every bounded measurable $f$,
\[
\bbE\sqb{f(Z_{n+1})U_\tau}=\bbE\sqb{f(Z_{n+1})\bbE\sqb{U_\tau\mid Z_{n+1}}}=0,
\]
which proves \eqref{eq:moment_char_generic}.

Conversely, suppose \eqref{eq:moment_char_generic} holds for every bounded measurable $f$. Then
\[
\bbE\sqb{f(Z_{n+1})\bbE\sqb{U_\tau\mid Z_{n+1}}}=0,\qquad \forall f\in\calM.
\]
Choosing $f(z)=\mathbbm{1}\sqb{z\in A}$ for an arbitrary measurable set $A\subseteq \calZ$ gives
\[
\bbE\sqb{\mathbbm{1}\sqb{Z_{n+1}\in A}\bbE\sqb{U_\tau\mid Z_{n+1}}}=0,\qquad \forall A.
\]
Hence $\bbE\sqb{U_\tau\mid Z_{n+1}}=0$ almost surely. Rewriting this identity yields
\[
\bbP\p{Y_{n+1}\in \widehat C_\tau(X_{n+1},Z_{n+1})\mid Z_{n+1}}=1-\alpha
\]
almost surely, which is exactly \eqref{eq:exact_conditional}.
\end{proof}

\begin{proof}[Proof of Proposition~\ref{prop:radius_nesting}]
Let $\Gamma(\calT,\calF)$ denote the feasible set of \eqref{eq:oracle_radius_problem} and \eqref{eq:oracle_radius_constraint} within the class $\calT$. If $\calT_1\subset \calT_2$, then every feasible radius in $\Gamma(\calT_1,\calF)$ also belongs to $\Gamma(\calT_2,\calF)$. Therefore the infimum of $\bbE\sqb{\tau(X,Z)}$ over $\Gamma(\calT_2,\calF)$ cannot exceed the corresponding infimum over $\Gamma(\calT_1,\calF)$. This proves
\[
L(\calT_2,\calF)\le L(\calT_1,\calF).
\]
The displayed consequences follow by taking $(\calT_1,\calT_2)=(\calT_Z,\calT_{XZ})$ and $(\calT_1,\calT_2)=(\calT_X,\calT_{XZ})$.
\end{proof}

\begin{proof}[Proof of Theorem~\ref{thm:joint_shift_robust}]
Condition on the training split. Then $\widehat h$ is fixed. Relabel the calibration sample as $\cb{(S_i,W_i)}^m_{i=1}$, where $W_i=(X_i,Z_i)$, and define the test score
\[
S_{m+1}=\abs{Y_{n+1}-\widehat h(X_{n+1})},\qquad W_{m+1}=(X_{n+1},Z_{n+1}).
\]
By assumption, conditional on the training split, the augmented sample $\cb{(S_i,W_i)}^{m+1}_{i=1}$ is exchangeable.

The calibration procedure of Section 4 is exactly the finite-dimensional conditional conformal construction of \citet{Gibbs2025conformalprediction} applied to the conformity score $S=\abs{Y-\widehat h(X)}$ and the covariate $W=(X,Z)$. Their finite-sample result therefore yields the conditional moment inequality
\begin{align}
\label{eq:proof_joint_moment}
\bbE\sqb{f(W_{m+1})\Bigp{\mathbbm{1}\sqb{S_{m+1}\le \widehat\tau_{XZ}(W_{m+1})}-(1-\alpha)}\mid \mathcal I_{\mathrm{tr}}}\ge 0
\end{align}
for every nonnegative $f\in\calF_W$.

Fix any measurable $f\in\calF_W$ with $f\ge 0$ and $\bbE\sqb{f(W)}>0$. Under the tilted law $\bbP^W_f$ defined in \eqref{eq:joint_tilt}, conditional on the training split, we have
\begin{align*}
\bbP_f^W\p{S_{m+1}\le \widehat\tau_{XZ}(W_{m+1})\mid \mathcal I_{\mathrm{tr}}}
&=
\frac{\bbE\sqb{f(W_{m+1})\mathbbm{1}\sqb{S_{m+1}\le \widehat\tau_{XZ}(W_{m+1})}\mid \mathcal I_{\mathrm{tr}}}}{\bbE\sqb{f(W_{m+1})\mid \mathcal I_{\mathrm{tr}}}}\\
&\ge
\frac{(1-\alpha)\bbE\sqb{f(W_{m+1})\mid \mathcal I_{\mathrm{tr}}}}{\bbE\sqb{f(W_{m+1})\mid \mathcal I_{\mathrm{tr}}}}
=1-\alpha,
\end{align*}
where the inequality uses \eqref{eq:proof_joint_moment}.

Finally, by the definition of the interval in \eqref{eq:joint_interval},
\[
S_{m+1}\le \widehat\tau_{XZ}(W_{m+1}) \Longleftrightarrow Y_{n+1}\in \widehat C_{XZ}(X_{n+1},Z_{n+1}).
\]
Hence
\[
\bbP_f^W\p{Y_{n+1}\in \widehat C_{XZ}(X_{n+1},Z_{n+1})\mid \mathcal I_{\mathrm{tr}}}\ge 1-\alpha.
\]
Taking expectations over the training split gives the claimed unconditional statement.
\end{proof}

\begin{proof}[Proof of Theorem~\ref{thm:shift_robust}]
We verify explicitly that the IV-indexed procedure is a specialization of the joint construction. In Section 4, set $W=Z$ and choose the feature map on $W$ to be the IV feature map $\phi$. Then
\[
\calG_W=\cb{w\mapsto \beta^\top\phi(w):\beta\in\bbR^d}=\calG,\qquad \calF_W=\calG_W\cap\cb{f\ge 0}=\calF.
\]
The augmented quantile-regression problem of Section 4 becomes exactly the IV-indexed problem in Section 5, because the covariate $W_i$ is now simply $Z_i$. Consequently the fitted radius from the joint construction satisfies $\widehat\tau_{XZ}(W_{n+1})=\widehat\tau_Z(Z_{n+1})$, and the interval \eqref{eq:joint_interval} reduces to the IV-indexed interval \eqref{eq:iv_indexed_interval}.

Under the same specialization, the tilted law \eqref{eq:joint_tilt} becomes
\[
\frac{d\bbP_f^W(y,x,z)}{d\bbP(y,x,z)}=\frac{f(z)}{\bbE\sqb{f(Z)}},
\]
which is exactly the IV-tilt family \eqref{eq:tilt_family}. Every assumption of Theorem~\ref{thm:joint_shift_robust} is therefore satisfied, and applying that theorem yields
\[
\bbP_f\p{Y_{n+1}\in \widehat C_Z(X_{n+1},Z_{n+1})}\ge 1-\alpha
\]
for every measurable $f\in\calF$ with $f\ge 0$ and $\bbE\sqb{f(Z)}>0$.
\end{proof}

\begin{proof}[Proof of Proposition~\ref{prop:stability}]
Fix $z\in\calZ$ and let
\[
T_z(s_1,\dots,s_m)
\]
denote the IV-indexed radius produced by Algorithm~\ref{alg:ivccp} at IV $z$ when the calibration scores are $s_1,\dots,s_m$ and the calibration IVs are kept fixed. Set
\[
\delta=\norm{\widehat h-h_0}_\infty.
\]
For each calibration observation,
\[
\abs{S_i-S_i^\star}=\abs{\abs{Y_i-\widehat h(X_i)}-\abs{Y_i-h_0(X_i)}}\le \abs{\widehat h(X_i)-h_0(X_i)}\le \delta.
\]
Therefore,
\[
S_i^\star-\delta\le S_i\le S_i^\star+\delta,\qquad i=1,\dots,m.
\]
We next use two properties of $T_z$. First, $T_z$ is coordinatewise monotone in the calibration scores. If one increases some calibration score while keeping everything else fixed, the largest accepted radius cannot decrease. Second, because the constant functions belong to $\calG$, the map is translation-equivariant,
\[
T_z(s_1+c,\dots,s_m+c)=T_z(s_1,\dots,s_m)+c,\qquad c\in\bbR.
\]
Indeed, shifting every calibration score and the candidate test score by the same constant leaves all residuals in the pinball loss unchanged after replacing $g$ by $g+c$, and $g+c\in\calG$ because $\calG$ contains constants.

Applying the coordinatewise inequalities to the monotone map $T_z$ gives
\[
T_z(S_1^\star-\delta,\dots,S_m^\star-\delta)\le T_z(S_1,\dots,S_m)\le T_z(S_1^\star+\delta,\dots,S_m^\star+\delta).
\]
By translation equivariance,
\[
T_z(S_1^\star-\delta,\dots,S_m^\star-\delta)=T_z(S_1^\star,\dots,S_m^\star)-\delta=\widehat\tau_Z^\star(z)-\delta,
\]
\[
T_z(S_1^\star+\delta,\dots,S_m^\star+\delta)=T_z(S_1^\star,\dots,S_m^\star)+\delta=\widehat\tau_Z^\star(z)+\delta.
\]
Hence
\[
\widehat\tau_Z^\star(z)-\delta\le \widehat\tau_Z(z)\le \widehat\tau_Z^\star(z)+\delta,
\]
which is equivalent to
\[
\abs{\widehat\tau_Z(z)-\widehat\tau_Z^\star(z)}\le \delta=\norm{\widehat h-h_0}_\infty.
\]
The interval-length bound follows immediately because $\mathrm{len}(\widehat C_Z(x,z))=2\widehat\tau_Z(z)$ and $\mathrm{len}(\widehat C_Z^\star(x,z))=2\widehat\tau_Z^\star(z)$.
\end{proof}

\begin{proof}[Proof of Theorem~\ref{thm:length}]
Under \eqref{eq:npiv}, the oracle scores are
\[
S_i^\star=\abs{Y_i-h_0(X_i)}=\abs{\varepsilon_i},\qquad i=1,\dots,m.
\]
Fix $z\in\calZ$. For the augmented sample consisting of the $m$ oracle calibration pairs and the additional pair $(z,\widehat\tau_Z^\star(z))$, let $\widehat g_z\in\calG$ be an optimal basic solution of the augmented quantile-regression problem, chosen so that $\widehat g_z(z)=\widehat\tau_Z^\star(z)$. Write the residuals as
\[
r_i=S_i^\star-\widehat g_z(Z_i),\qquad i=1,\dots,m,
\qquad r_{m+1}=\widehat\tau_Z^\star(z)-\widehat g_z(z)=0.
\]
Let
\[
\mathcal A_z=\cb{i\in\cb{1,\dots,m+1}: r_i=0}
\]
be the active set. Because $\widehat g_z$ is chosen from a $d$-dimensional linear class and we take an optimal basic feasible solution of the associated linear program, at most $d$ residual constraints can be tight. Therefore,
\[
\abs{\mathcal A_z}\le d.
\]
The KKT conditions for quantile regression imply the usual empirical quantile-balance relation. Since the only ambiguity in that relation comes from observations with zero residual, the deviation of the empirical quantile level from its target is bounded by the fraction of active residuals. Therefore,
\begin{align}
\label{eq:proof_empirical_balance_new}
\abs{\frac{1}{m+1}\sum_{i=1}^{m+1}\mathbbm{1}\sqb{S_i^\star\le \widehat g_z(Z_i)}-(1-\alpha)}\le \frac{\abs{\mathcal A_z}}{m+1}\le \frac{d}{m+1}.
\end{align}
Because $\widehat g_z(z)=\widehat\tau_Z^\star(z)$, Assumption~\ref{ass:transfer} combines with \eqref{eq:proof_empirical_balance_new} to yield
\[
\abs{F_z\p{\widehat\tau_Z^\star(z)}-(1-\alpha)}\le \frac{d}{m+1}+o_p(1).
\]
Since $F_z\p{\tau_0(z)}=1-\alpha$, we obtain
\[
\abs{F_z\p{\widehat\tau_Z^\star(z)}-F_z\p{\tau_0(z)}}\le \frac{d}{m+1}+o_p(1).
\]
Assumption~\ref{ass:density} then implies
\[
\abs{\widehat\tau_Z^\star(z)-\tau_0(z)}\le \frac{d}{(m+1)p_{\min}}+o_p(1).
\]
The bound for $\widehat\tau_Z(z)$ follows by adding and subtracting $\widehat\tau_Z^\star(z)$ and using Proposition~\ref{prop:stability}:
\begin{align*}
\abs{\widehat\tau_Z(z)-\tau_0(z)}
&\le \abs{\widehat\tau_Z(z)-\widehat\tau_Z^\star(z)}+\abs{\widehat\tau_Z^\star(z)-\tau_0(z)}\\
&\le \norm{\widehat h-h_0}_\infty+\frac{d}{(m+1)p_{\min}}+o_p(1).
\end{align*}
Finally,
\begin{align*}
\mathrm{len}(\widehat C_Z(x,z)) - \mathrm{len}(C_0(x,z))
&=2\widehat\tau_Z(z)-2\tau_0(z)\\
&\le 2\abs{\widehat\tau_Z(z)-\tau_0(z)}\\
&\le 2\norm{\widehat h-h_0}_\infty+\frac{2d}{(m+1)p_{\min}}+o_p(1),
\end{align*}
which completes the proof.
\end{proof}

\begin{proof}[Proof of Proposition~\ref{prop:iw_identity}]
Fix any measurable $\tau_X\colon \calX\to\bbR_+$. By definition of the conditional density ratio,
\begin{align*}
\bbE\sqb{m_{\tau_X}(S,X)r_0(S,X\mid z)}
&=
\int m_{\tau_X}(s,x)\frac{p_{S,X\mid Z}(s,x\mid z)}{p_{S,X}(s,x)}p_{S,X}(s,x)\,ds\,dx\\
&=
\int m_{\tau_X}(s,x)p_{S,X\mid Z}(s,x\mid z)\,ds\,dx\\
&=\bbE\sqb{m_{\tau_X}(S,X)\mid Z=z},
\end{align*}
which is \eqref{eq:iw_identity}. The equivalence with \eqref{eq:xindexed_conditional_moment} is immediate.
\end{proof}

\begin{proof}[Proof of Proposition~\ref{prop:iw_characterization}]
If $Q(\tau_X)=0$, then the integrand
\[
z\mapsto \Bigp{\bbE\sqb{m_{\tau_X}(S,X)r_0(S,X\mid z)}}^2
\]
vanishes for $\nu$-almost every $z$. By Proposition~\ref{prop:iw_identity}, this is equivalent to
\[
\bbE\sqb{m_{\tau_X}(S,X)\mid Z=z}=0
\]
for $\nu$-almost every $z$. The converse implication is immediate. If $\nu$ has full support and the conditional moment is continuous in $z$, then vanishing $\nu$-almost everywhere implies vanishing everywhere on the support of $Z$, which yields exact conditional coverage for $\widehat C_X$.
\end{proof}

\begin{proof}[Proof of Theorem~\ref{thm:weighted_conformal_tx}]
Condition on the samples used to construct $\widehat h$ and $q$. Then $\widehat h$ and $q$ are fixed, so the normalized recalibration scores
\[
R_i=\frac{\abs{Y_i-\widehat h(X_i)}}{q(X_i)},\qquad i\in\mathcal I_{\mathrm{rcal}},
\]
and the test score
\[
R_{n+1}=\frac{\abs{Y_{n+1}-\widehat h(X_{n+1})}}{q(X_{n+1})}
\]
are exchangeable under the training law. The target probability law is the single shifted law $\bbP_{f_0}$ induced by the known weights $w_{f_0}(Z)=f_0(Z)/\bbE\sqb{f_0(Z)}$.

For a test point with IV value $z$, weighted split conformal prediction under covariate shift, as developed by \citet{Tibshirani2019conformalprediction}, applies directly to the normalized score $R$ with covariate $Z$ and target weight $w_{f_0}$. Let
\[
\widehat t_{f_0}(z)=\inf\cb{t\in\bbR_+: \sum_{i\in\mathcal I_{\mathrm{rcal}}}\frac{w_{f_0}(Z_i)}{\sum_{j\in\mathcal I_{\mathrm{rcal}}}w_{f_0}(Z_j)+w_{f_0}(z)}\mathbbm{1}\{R_i\le t\}\ge 1-\alpha},
\]
with the convention $\inf\emptyset=\infty$. Their finite-sample theorem yields
\[
\bbP_{f_0}\p{R_{n+1}\le \widehat t_{f_0}(Z_{n+1})}\ge 1-\alpha.
\]
Because $w_{f_0}(Z_{n+1})\le B_{f_0}$ almost surely and the cutoff is nondecreasing in the test weight, we have
\[
\widehat t_{f_0}(Z_{n+1})\le \widehat t_{f_0}
\qquad\text{almost surely.}
\]
Therefore,
\[
\bbP_{f_0}\p{R_{n+1}\le \widehat t_{f_0}}\ge 1-\alpha.
\]
By definition of the interval \eqref{eq:xindexed_scaled_interval},
\[
R_{n+1}\le \widehat t_{f_0}
\Longleftrightarrow
Y_{n+1}\in \widehat C_{X,f_0}(X_{n+1}).
\]
Hence
\[
\bbP_{f_0}\p{Y_{n+1}\in \widehat C_{X,f_0}(X_{n+1})}\ge 1-\alpha,
\]
which proves the claim.
\end{proof}

\end{document}